\theoremstyle{definition} 
\theoremstyle{theorem} 
\newtheorem{proposition}{Proposition}
\newtheorem{corollary}{Corollary}
\newcommand{\blue}[1]{{\color{blue}#1}}
\def\be{\begin{equation}}
\def\ee{\end{equation}}
\def\bea{\begin{eqnarray}}
\def\eea{\end{eqnarray}}
\begin{document}

\hfill\today

\begin{center}
\baselineskip 24 pt {\LARGE \bf  
Exact closed-form solution of a modified SIR model}
\end{center}

\begin{center}

{\sc Angel Ballesteros$^1$, Alfonso Blasco$^1$, Ivan Gutierrez-Sagredo$^{1,2}$}

\medskip

{$^1$Departamento de F\'isica, Universidad de Burgos, 
09001 Burgos, Spain}

{$^2$Departamento de Matem\'aticas y Computaci\'on, Universidad de Burgos, 
09001 Burgos, Spain}

 \medskip
 
e-mail: {\href{mailto:angelb@ubu.es}{angelb@ubu.es}, \href{mailto:ablasco@ubu.es}{ablasco@ubu.es}, \href{mailto:igsagredo@ubu.es}{igsagredo@ubu.es}}

\end{center}

\medskip

\begin{abstract}

The exact analytical solution in closed form of a modified SIR system where recovered individuals are removed from the population is presented. In this dynamical system the populations $S(t)$ and $R(t)$ of susceptible and recovered individuals are found to be generalized logistic functions, while infective ones $I(t)$ are given by a generalized logistic function times an exponential, all of them with the same characteristic time. The dynamics of this modified SIR system is analyzed and the exact computation of some epidemiologically relevant quantities is performed. The main differences between this modified SIR model and original SIR one are presented and explained in terms of the zeroes of their respective conserved quantities. Moreover, it is shown that the modified SIR model with time-dependent transmission rate can be also solved in closed form for certain realistic transmission rate functions.

\end{abstract}

\bigskip

\noindent Keywords: epidemics, compartmental models, dynamical systems, exact solution, SIR model, time-dependent transmission rate


\section{Introduction}

The so-called deterministic compartmental dynamical systems are the simplest amongst the models of epidemiogical dynamics, and a large number of them have been recently considered in relation to the COVID-19 pandemic (see, for example, \cite{MB2020science,VLABHPRYV2020,CG2020size} and references therein).  The study of these models typically relies on techniques from dynamical systems theory and numerical studies but, despite these techniques allow a deep understanding of their associated dynamics, the simplicity and accurateness provided by exact simple solutions are indeed helpful both from the mathematical and the epidemiological perspectives (see for instance~\cite{Hethcote1973,Bailey1975,Nucci2004} for the exact solutions of the two-dimensional SIS (susceptible-infective-susceptible) model).

Among three-dimensional models, the well-known SIR (susceptible-infective-recovered) system 
\begin{equation}
\label{eq:SIR}
\dot{S} = -\beta\,S\,I, \qquad\qquad\qquad \dot{I} = \beta\,S\,I-\alpha \,I, \qquad\qquad\qquad \dot R = \alpha I , \\
\end{equation}
proposed by Kermack and McKendrick \cite{KM1927sir} is probably the best known one. Despite its apparent simplicity, it has been succesfully used to predict relevant features of the dynamics of a number of epidemics, including the actual COVID-19 pandemic \cite{Buonomo2020,Postnikov2020sir}. Therefore, the study of exact solutions for this system has been faced from several perspectives: 
Painlev\'e analysis and Lie symmetries~\cite{LA2004}, parametric-form solutions~\cite{Harko2014}, asymptotic approximants~\cite{Barlow2020}, Hamiltonian structures~\cite{BBG2020hamiltonianepidemics} and time reparametrization~\cite{Cadoni2020}. 
Nevertheless, all these approaches lead to solutions which are either perturbative or given in terms of one implicit (or inverse) function. In this sense it can be said that the system \eqref{eq:SIR} does not admit an `exact analytic solution in closed form', {\em i.e.} a solution that can be expressed in terms of a finite number of ordinary operations among elementary functions.


In contradistinction to this fact,  in this paper we present the exact analytical solution in closed form of the modified SIR system~\cite{ Brauer1990}
\begin{equation}
\label{eq:modifiedSIR}
\dot{S} = -\dfrac{\beta\,S\,I}{S+I}, \qquad\qquad\qquad \dot{I} = \dfrac{\beta\,S\,I}{S+I}-\alpha \,I , \qquad\qquad\qquad \dot{R} = \alpha\,I ,
\end{equation}
where $\alpha, \beta \in \mathbb R_+$. We show that, surprisingly enough, the general solution of this modified SIR system is given in terms of generalized logistic and exponential functions, namely
\begin{equation}
\label{eq:sol}
S(t) = S_0 \left( \frac{S_0+I_0}{S_0+ I_0 e^{t/\tau}}  \right)^{\beta \tau} , \qquad I(t) = I_0 \left( \frac{S_0 + I_0}{S_0 + I_0 e^{t/\tau}} \right)^{\beta \tau} e^{t/\tau}, \qquad R(t) = 1 - \frac{(S_0 + I_0)^{\beta \tau}}{(S_0+I_0 e^{t/\tau})^{\beta \tau -1}},
\end{equation}
where $\tau=(\beta-\alpha)^{-1}$. This is, to the best of our knowledge, the first exact solution of a three-dimensional compartmental epidemiological model in closed form. We will also analyze the modified SIR model from a dynamical systems perspective and show that, in some range of the model parameters $(\alpha, \beta)$, the dynamics of \eqref{eq:modifiedSIR} is actually quite close to the one of the SIR system \eqref{eq:SIR}. 

We recall that the modified SIR model~\eqref{eq:modifiedSIR} has been proposed \cite{Brauer1990,Gumral1993} as the appropriate generalization of the SIR model when the recovered individuals are removed from the population. Therefore, although this model is not directly applicable to the COVID-19 pandemic, it will be certainly meaningful for the study of diseases with a very high death rate (the so-called `fatal diseases' like, for instance, the Feline Infectious Peritonitis (FIP), Spongiform Encephalopathy (BSE), Leishmaniasis, Rabbit Haemorrhagic Disease, and the Highly Pathogenic Avian Influenza (H5N1))~\cite{Keeling}) or also in cases when very prolongued quarantines are prescribed.

Moreover, the modified SIR model with a time-dependent transmission rate $\beta(t)$ will be also shown to be exactly solvable in closed form for certain realistic $\beta(t)$ functions, and its solutions will be compared with the constant rate system.
It is worth stressing that these time-dependent transmission rate models are well-known to be the appropriate ones for the analysis of modifications of the transmission rate of a given disease, that can be due -for instance- to seasonal effects, to changes in host behaviour or immunity or to modifications in the abundance of vectors/pathogens (see for instance~\cite{Keeling, BacaerGomes, Ponciano, Liu, Pollicott, Mummert} and references therein). As a consequence, $\beta(t)$ models are indeed of the outmost interest for the planning of non-pharmacological control strategies and, again, the existence of exact closed-form solutions for any of them was lacking.

The structure of the paper is the following. In the next Section we derive the exact solution~\eqref{eq:sol} by making use of the fact that any epidemiological three-dimensional model has a conserved quantity, which in turn is straightforwardly derived from the the more general result (recently proved in \cite{BBG2020hamiltonianepidemics}) stating that any three-dimensional compartmental epidemiological  model is a generalized Hamiltonian system. Moreover, the conserved quantity turns out to be just the Casimir of the Poisson algebra of the underlying Hamiltonian structure. In Section 3 we present the analysis of the modified SIR system~\eqref{eq:modifiedSIR} both from a dynamical systems approach and from a Poisson--algebraic point of view,  and we show that the exact solution~\eqref{eq:sol} is helpful in order to obtain some relevant epidemiological quantities in a simple and exact form. The main differences between the SIR and modified SIR dynamical systems are analysed in Section 4,  where we show that these differences can be understood in terms of the the zeroes of their respective conserved quantities, which are again the Casimir functions for both models that are obtained through the formalism presented in \cite{BBG2020hamiltonianepidemics}. Finally, in Section 5  we consider the generalization of the modified SIR system \eqref{eq:modifiedSIR} when the transmission rate $\beta (t)$ varies with time. We shall show two instances of transmission rates for which the model is explicitly integrable in closed form, and the numerical solution of the model with periodic transmission rate will be also presented.

\section{Exact solution of the modified SIR model}

In order to find the exact solution of \eqref{eq:modifiedSIR} we make use of the following recent result (see \cite{BBG2020hamiltonianepidemics} for details).  

\begin{proposition} \cite{BBG2020hamiltonianepidemics}
\label{prop:EH}
Every epidemiological compartmental model with constant population is a generalized Hamiltonian system, with Hamiltonian function $\mathcal{H}$ given by the total population.
\end{proposition}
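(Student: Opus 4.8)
The plan is to produce, for an arbitrary compartmental model, a (possibly degenerate) Poisson tensor $J^{ij}(x)$ such that the equations of motion take the form $\dot x^i=\sum_j J^{ij}\,\partial_j\mathcal H$ with $\mathcal H(x)=\sum_i x_i$ the total population. The single structural input is that in a compartmental model individuals are merely transferred between compartments, so that $\sum_i\dot x_i\equiv 0$; hence $\mathcal H$ is automatically a first integral and the vector field $F=(\dot x_1,\dots,\dot x_n)$ is everywhere tangent to the hyperplanes $\mathcal H=\mathrm{const}$. I would carry out the argument in the three-dimensional case $x=(S,I,R)$ that is relevant here; the general $n$-compartment case follows by an analogous but iterated construction, nesting level sets of $\mathcal H$ together with further conserved quantities.

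In dimension three every antisymmetric $J$ has the form $J^{ij}=\epsilon^{ijk}\zeta_k$ for some vector field $\zeta$, in which case $\{x^i,\mathcal H\}=(\nabla\mathcal H\times\zeta)^i$ and the Jacobi identity $[J,J]=0$ collapses to the single scalar condition $\zeta\cdot(\nabla\times\zeta)=0$. The key algebraic observation is that this condition holds automatically as soon as $\zeta=g\,\nabla C$ for scalar functions $g,C$, since $\nabla\times(g\,\nabla C)=\nabla g\times\nabla C$ and the triple product $g\,\nabla C\cdot(\nabla g\times\nabla C)$ vanishes identically. It therefore suffices to exhibit a function $C$, functionally independent of $\mathcal H$, together with a scalar $g$, such that $F=g\,(\nabla\mathcal H\times\nabla C)$; then $J^{ij}=g\,\epsilon^{ijk}\partial_k C$ does the job, and $C$ is automatically the Casimir, $\{x^i,C\}=g\,(\nabla C\times\nabla C)^i=0$, which recovers the conserved quantity advertised in the introduction.

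To construct $C$ I would use that, since $\mathcal H$ is conserved, $F$ restricts to a two-dimensional flow on each plane $\mathcal H=h$; near any non-equilibrium point the flow-box theorem yields a local first integral of that planar flow, which extends to a function $C$ on a neighbourhood in $\mathbb R^3$ that is functionally independent of $\mathcal H$. Because $\mathcal H$ and $C$ are both conserved, $F$ is orthogonal to $\nabla\mathcal H$ and to $\nabla C$, hence parallel to $\nabla\mathcal H\times\nabla C$, and one simply defines $g$ by this proportionality. For the specific models of interest one can do far better than this local statement: the homogeneity of the incidence terms lets the reduced planar equation be integrated explicitly, producing $C$ as an elementary global function — precisely the conserved quantity that will be used in the next Section to integrate \eqref{eq:modifiedSIR} in closed form.

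The Jacobi identity thus costs nothing; the genuine difficulty lies in the global behaviour of the pair $(C,g)$. One must check that the locally defined first integrals patch to a single-valued function on the physically relevant region, and one must control the locus where the conformal factor $g$ vanishes or blows up — that is, where the Poisson bivector degenerates, which includes the equilibria, the disease-free invariant plane, and the boundary of the positivity simplex. Showing that this degeneracy locus does not obstruct the global statement is what upgrades the local normal form to the theorem of \cite{BBG2020hamiltonianepidemics}; for compartmental models it follows from the essentially monotone character of the reduced two-dimensional dynamics together with the particular algebraic form of the transfer terms.
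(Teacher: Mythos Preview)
The paper does not prove this proposition: it is quoted from \cite{BBG2020hamiltonianepidemics}, and the present text only writes down, for the particular three-dimensional system \eqref{eq:modifiedSIR}, the explicit Poisson brackets \eqref{eq:Poisson} and Casimir \eqref{eq:C} that the cited result guarantees. There is therefore no in-paper argument to compare against.

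Your three-dimensional construction is correct and is the standard route. Writing $J^{ij}=\epsilon^{ijk}\zeta_k$, the Jacobi identity collapses to the Frobenius condition $\zeta\cdot(\nabla\times\zeta)=0$, which holds automatically when $\zeta=g\,\nabla C$; since $\mathcal{H}$ is conserved, the flow-box theorem on each leaf $\{\mathcal{H}=h\}$ produces a local second integral $C$, whence $F$ is parallel to $\nabla\mathcal{H}\times\nabla C$ and $g$ is fixed by proportionality. This is essentially the mechanism that underlies the three-dimensional case of \cite{BBG2020hamiltonianepidemics}, and you correctly observe that for the concrete models treated here the Casimir is a global elementary function rather than a merely local flow-box coordinate.

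Where your sketch becomes thin is the sentence on the general $n$-compartment case. The phrase ``analogous but iterated construction, nesting level sets of $\mathcal{H}$ together with further conserved quantities'' is not a proof: conservation of total population supplies exactly one integral, and in dimension $n\geq 4$ the flow-box argument on its leaves does not manufacture the additional $n-3$ integrals your wording suggests. A local Hamiltonian realisation still exists for trivial reasons (straighten $F$ near a non-equilibrium point and take a constant rank-two bivector with $\mathcal{H}$ as one symplectic coordinate), but that is not the iterated scheme you describe, and it throws away the Casimir picture that is the whole point for the subsequent integration. The general statement in \cite{BBG2020hamiltonianepidemics} leans on the specific algebraic form of compartmental transfer terms; if you want to cover $n>3$ you should invoke that structure explicitly rather than gesture at an inductive geometric argument.
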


For the system \eqref{eq:modifiedSIR} the generalized Hamiltonian structure is thus explicitly provided by the Hamiltonian function 
\begin{equation}
\label{eq:H}
\mathcal{H}=S+I+R, 
\end{equation}
together with the associated Poisson structure, which is found to be given by the fundamental brackets 
\begin{equation}
\label{eq:Poisson}
\{S,I\}=0, \qquad \{S,R\}=-\dfrac{\beta\,S\,I}{S+I}, \qquad \{I,R\}=\dfrac{\beta\,S\,I}{S+I}-\alpha \,I \, ,
\end{equation}
and leads to the system~\eqref{eq:modifiedSIR} through Hamilton's equations
\begin{equation}
\label{eq:Hamilton_eqs}
\dot S = \{S,\mathcal H \} , \qquad
\dot I = \{I,\mathcal H \} , \qquad
\dot R = \{R,\mathcal H \} .
\end{equation}

Since every three-dimensional Poisson structure has a Casimir function $\mathcal{C}$, i.e. a function $\mathcal C : \mathcal U \subseteq \mathbb R^3 \to \mathbb R$ such that $\{S,\mathcal C \} = \{I,\mathcal C \} = \{R,\mathcal C \}=0$, then $\mathcal{C}$ is a conserved quantity for any generalized Hamiltonian system~\eqref{eq:Hamilton_eqs} defined on such a Poisson manifold. Therefore:
\begin{corollary} 
\label{prop:}
Every three-dimensional epidemiological compartmental model with constant population has a conserved quantity, which is functionally independent of the Hamiltonian function.
\end{corollary}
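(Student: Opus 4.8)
The plan is to read the statement off the generalized Hamiltonian structure supplied by Proposition~\ref{prop:EH}. First I would apply that proposition to regard the given three-dimensional compartmental model as a generalized Hamiltonian system on a domain $\mathcal U\subseteq\mathbb R^3$, with Hamiltonian $\mathcal H=$ total population and Poisson bracket of the type displayed in \eqref{eq:Poisson}. The corollary then splits into two assertions: (i) this Poisson structure carries a Casimir $\mathcal C$, which is automatically a constant of the motion; and (ii) $\mathcal C$ may be chosen functionally independent of $\mathcal H$. Assertion (i) is precisely what was recalled in the paragraph preceding the statement, so the only genuinely new point to establish is (ii).

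For completeness I would first recall why (i) holds in three dimensions: writing the bracket as $\{x_i,x_j\}=\sum_k\varepsilon_{ijk}\,W_k$ for a vector field $W$ on $\mathcal U$, the Jacobi identity is equivalent to $W\cdot(\nabla\times W)=0$; wherever $W\neq 0$ this makes the plane field orthogonal to $W$ an integrable distribution, so Frobenius yields a local first integral $\mathcal C$ with $\nabla\mathcal C\parallel W$ and $\nabla\mathcal C\neq 0$ --- this $\mathcal C$ is a (non-degenerate) Casimir. For the compartmental models considered here such a $\mathcal C$ is in fact globally defined on the physical domain and can be computed explicitly by integrating the linear system $\{S,\mathcal C\}=\{I,\mathcal C\}=\{R,\mathcal C\}=0$ (for \eqref{eq:Poisson} one finds, incidentally, that $\mathcal C$ depends on $S$ and $I$ alone). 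Conservation is then immediate: by the Leibniz rule $\{F,\mathcal C\}=0$ for every smooth $F$, hence in particular $\dot{\mathcal C}=\{\mathcal C,\mathcal H\}=0$ along \eqref{eq:Hamilton_eqs}.

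The substantive step is (ii), and it is where the hypothesis of being a genuine epidemiological model --- and not merely some three-dimensional Poisson system --- enters. Such a model has non-trivial dynamics, so there is a point $p\in\mathcal U$ at which the Hamiltonian vector field $X_{\mathcal H}=\Pi^{\sharp}(d\mathcal H)$, whose components are $\dot S,\dot I,\dot R$ by Hamilton's equations \eqref{eq:Hamilton_eqs}, is non-zero. Since a skew bilinear form on a three-dimensional space has rank $0$ or $2$, the Poisson bivector has rank $2$ at $p$ and its kernel is a line; by the construction above that line is $\mathbb R\,d\mathcal C(p)$ with $d\mathcal C(p)\neq 0$. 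But $X_{\mathcal H}(p)\neq 0$ means exactly that $d\mathcal H(p)$ does not lie in that kernel, so $d\mathcal C(p)$ and $d\mathcal H(p)$ are linearly independent; equivalently $d\mathcal C\wedge d\mathcal H\neq 0$ on a neighbourhood of $p$, which is the asserted functional independence. (At the same level of rigour one may instead simply exhibit $\mathcal C$ explicitly and verify $d\mathcal C\wedge d\mathcal H\neq 0$ by hand, which is immediate for \eqref{eq:Poisson}.) The only real care required is to use the non-degenerate Casimir and to work on a domain $\mathcal U$ on which it is globally defined and the equilibrium set is proper; everything else is bookkeeping built on Proposition~\ref{prop:EH} and standard Poisson geometry.
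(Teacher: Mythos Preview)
Your proposal is correct and follows the same logic as the paper: the paper simply invokes the standard fact that every three-dimensional Poisson structure admits a Casimir (your step~(i)) and then observes, in the sentence immediately following the corollary, that functional dependence of $\mathcal C$ and $\mathcal H$ would force the dynamics~\eqref{eq:Hamilton_eqs} to be trivial (the contrapositive of your step~(ii)). Your treatment is considerably more explicit---supplying the Frobenius argument for the existence of the Casimir and the rank-of-the-bivector argument for independence---but the underlying approach is identical.
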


Note that in case that $\mathcal{H}$~\eqref{eq:H} is functionally dependent of $\mathcal{C}$, the dynamics~\eqref{eq:Hamilton_eqs} would be trivial. For the specific Poisson algebra \eqref{eq:Poisson} the Casimir function is found to be 
\begin{equation}
\label{eq:C}
\mathcal{C}=S^{-\frac{\alpha}{\beta}}(S+I) .
\end{equation}
We can use this Casimir function to restrict the dynamics of \eqref{eq:modifiedSIR} to the symplectic leaf defined by the value of $\mathcal{C}$ given by the  initial conditions $S(0)=S_0$, $I(0)=I_0$, $R(0)=R_0$, namely
\begin{equation}
\label{eq:C0}
\mathcal C_0=S_0^{-\frac{\alpha}{\beta}}(S_0+I_0) \, .
\end{equation}
This can be also used in order to reduce the system \eqref{eq:modifiedSIR} to a nonlinear ODE, since from \eqref{eq:C} and \eqref{eq:C0} we obtain the phase space equation 
\begin{equation}
\label{eq:I_S}
I(S) = (S_0+I_0) \left( \frac{S}{S_0} \right)^{\frac{\alpha}{\beta}} - S ,
\end{equation}
which can be inserted within \eqref{eq:modifiedSIR} in order to get the following nonlinear ODE for the variable $S$:
\begin{equation}
\dot S = - \beta S \left( 1- \frac{S_0^{\alpha/\beta}}{S_0+I_0} S^{1-\alpha/\beta} \right) .
\end{equation}
This ODE suggests the change of variable 
\begin{equation}
\label{eq:AS}
A(t) = S(t)^{1-\alpha/\beta} ,
\end{equation}
thus obtaining 
\begin{equation}
\dot A = - (\beta - \alpha) A \left( 1- \frac{S_0^{\alpha/\beta}}{S_0+I_0} A \right) .
\end{equation}
If we now set 
\begin{equation}
\label{eq:BA}
B(t) = \frac{S_0^{\alpha/\beta}}{S_0+I_0} A (t) ,
\end{equation}
we obtain
\begin{equation}
\label{eq:ODE_B}
\dot B = - (\beta - \alpha) B \left( 1- B \right) .
\end{equation}
The general solution to this ODE is a logistic function with characteristic time $\tau = (\beta-\alpha)^{-1}$, i.e.
\begin{equation}
B(t) = \frac{1}{1+e^{(\beta-\alpha)t +d}}  = \frac{1}{1+e^{t / \tau+d}} .
\end{equation}
The integration constant $d$ is fixed by the initial condition $B(0)=\frac{S_0}{S_0+I_0}$, thus obtaining $e^d=\frac{I_0}{S_0}$. Therefore we can write
\begin{equation}
B(t) = \frac{1}{1+ \frac{I_0}{S_0} e^{t/\tau}} \, .
\end{equation}
Now, inverting the change of variables \eqref{eq:BA} we get
\begin{equation}
A(t) = \frac{(S_0+I_0)S_0^{1/\beta\tau}}{S_0+ I_0 e^{t/\tau}} ,
\end{equation}
and finally, from \eqref{eq:AS}, we obtain
\begin{equation}
\label{eq:St}
S(t) = S_0 \left( \frac{S_0+I_0}{S_0+ I_0 e^{t/\tau}}  \right)^{\beta \tau} .
\end{equation}
From the phase space equation \eqref{eq:I_S} we directly get
\begin{equation}
\label{eq:IS}
I(S) = (S_0+I_0) \left( \frac{S}{S_0} \right)^{\alpha/\beta} - S ,
\end{equation}
and inserting \eqref{eq:St} we are able to obtain $I(t)$ without any further integration. Finally, we have that
\begin{equation}
\label{eq:It}
I(t) = I_0 \left( \frac{S_0 + I_0}{S_0 + I_0 e^{t/\tau}} \right)^{\beta \tau} e^{t/\tau} .
\end{equation}
Note that $I(t)$ is related to $S(t)$ by 
\begin{equation}
\label{eq:ISexp}
I(t) = \frac{I_0}{S_0} S(t) e^{t/\tau} \, ,
\end{equation}
and from the conservation of the total population, we find
\begin{equation}
\label{eq:Rt}
R(t) = 1 -S(t) - I(t) =1 - \frac{(S_0 + I_0)^{\beta \tau}}{(S_0+I_0 e^{t/\tau})^{\beta \tau -1}} .
\end{equation}

Summarizing, equations \eqref{eq:St} shows that the susceptible population follows a generalized logistic function, or Richards' curve, with characteristic time $\tau$ and the relevant constants set to satisfy  that $S(0) = S_0$ and $\lim_{t \to \infty} S(t) = 0$. Moreover, the dynamics of the infective population given by \eqref{eq:It} is essentially this same function multiplied by an exponential with the same characteristic time. This is indeed a very natural dynamics for infective processes and, as we will see in the sequel, this dynamics strongly resembles the one described by the famous SIR model \eqref{eq:SIR}, provided that the range of values for the parameters $\alpha$ and $\beta$ is similar to the one found in actual epidemics. 

\vspace{-0.5cm}

\paragraph{Remark 1}
It is worth stressing that the method here presented is indeed applicable to any three-dimensional compartmental model, provided we are able to find the Casimir function of the associated Poisson structure. Nevertheless, the distinctive feature of the system \eqref{eq:modifiedSIR} is that the resulting ODE admits a closed-form solution. We recall that in \cite{BBG2020hamiltonianepidemics} such Casimir function approach was used in order to find the solution for some epidemiological models in terms of an inverse function. 

\vspace{-0.5cm}

\paragraph{Remark 2}  
Solution~\eqref{eq:sol} suggests that the new variable $y=I/S$ should be worth to be considered, since $y(t)$ is an exponential function. In fact, by taking $(S,y,R)$ as new dynamical variables, the modified SIR system reads
\begin{equation}
\label{eq:MSIRy}
\dot{S} = -\frac{\beta}{1+y}\,S\,y, \qquad\qquad\qquad \dot{y} = (\beta-\alpha) \,y, \qquad\qquad\qquad \dot R = \alpha\,S\,y , \\
\end{equation}
in which the equation for $y$ is linearized, as expected. Note that the original SIR system is written in terms of $y$ as
\begin{equation}
\label{eq:SIRy}
\dot{S} = -\beta\,S^2\,y, \qquad\qquad\qquad \dot{y} = (\beta\,S-\alpha) \,y
+ \beta\,S\,y^2, \qquad\qquad\qquad \dot R = \alpha\,S\,y , \\
\end{equation}
which is quite different from~\eqref{eq:MSIRy} as a dynamical system, as it will be shown in Section 4.

\section{Analysis of the modified SIR dynamics}

In this Section we briefly analyze the main features of the modified SIR system \eqref{eq:modifiedSIR}. Without any loss of generality we can assume that $R_0=0$, so $S_0 + I_0 =1$, and the solution of \eqref{eq:modifiedSIR} reads 
\begin{equation}
\label{eq:solsimply}
S(t) = \frac{S_0}{\left(  S_0+ I_0 e^{t/\tau} \right)^{\beta \tau} }   , \qquad\qquad I(t) =  \frac{I_0 \, e^{t/\tau}}{\left( S_0 + I_0 e^{t/\tau} \right)^{\beta \tau}}  , \qquad\qquad R(t) = 1 - \frac{1}{ \left( S_0+I_0 e^{t/\tau} \right)^{\beta \tau -1}} .
\end{equation}
As we have previously stated, the behavior of $S(t)$ is that of a generalized logistic function while the evolution of the infective population $I(t)$ is given by a generalized logistic function times an exponential. This means that since $\beta \tau > 1$, the logistic term dominates for large times and therefore $\lim_{t \to \infty} I(t) = 0$. However during the first stage of the outbreak the exponential term is the dominating one, and thus the model presents the characteristic infection peak (for appropriate values of the parameters $\alpha$ and $\beta$). The behavior of the functions $S(t)$ and $I(t)$ for different values of $\alpha$ and $\beta$ is shown in Figure \ref{fig:comp_traj}.

A fundamental question to be answered by any epidemiological model is whether, for given values of the parameters, there will be an outbreak. For the modified SIR system we see, simply by evaluating the second equation from \eqref{eq:modifiedSIR} at $t=0$,
 \begin{equation}
\frac{d}{dt}\bigg |_{t=0} I(t) = I_0 \left( \frac{\beta S_0}{S_0 + I_0} - \alpha \right) ,
\end{equation}
and the outbreak will exist if and only if $\beta S_0 > \alpha (S_0 + I_0)$, or equivalently 
\begin{equation}
\label{eq:peak_cond}
\frac{\beta}{\alpha}-1 > \frac{I_0}{S_0} ,
\end{equation}
which in the case $S_0 + I_0 =1$ means that
\begin{equation}
\label{eq:peak_cond2}
 S_0 > \frac{\alpha}{\beta}.
\end{equation}
Obviously, this same result can be obtained by checking the condition for which $I(t)$ has a maximum. Moreover, the analytic solution allows us to exactly determine the time at which the infection peak $t_{\text{peak}}$ is reached, and we obtain
\begin{equation}
\label{eq:tpeak}
t_{\text{peak}} = \tau \log \left( \frac{S_0}{I_0 (\beta \tau - 1)} \right) = \tau \log \left( \frac{S_0}{I_0} \left( \frac{\beta}{\alpha} - 1 \right) \right) \, ,
\end{equation}
which is positive if and only if \eqref{eq:peak_cond} holds. The fraction of infected population at the infection peak reads 
\begin{equation}
\label{eq:Ipeak}
I(t_{\text{peak}}) = \left( \frac{\beta \tau -1}{S_0} \right)^{\beta \tau - 1} \left( \frac{S_0 + I_0}{\beta \tau} \right)^{\beta \tau - 1} = 
S_0 \left( \frac{\beta}{\alpha}-1 \right) \left( \frac{\alpha}{\beta} \left(  1+\frac{I_0}{S_0} \right) \right)^{\beta/(\beta-\alpha)} .
\end{equation}

A relevant epidemiological quantity is the well-known basic reproduction number $\mathcal R_0 $, defined as the average number of secondary cases produced by one infected individual introduced into a population of susceptible individuals during the mean infectious time $T$ (see~\cite{Hethcote1976,VandenDriessche2017,Hethcote2000} and references therein). It is easy to see that the value of $\mathcal R_0$ for the modified SIR model \eqref{eq:modifiedSIR} is exactly the same as the $\mathcal R_0$ for the SIR model \eqref{eq:SIR}, \emph{i.e.}  $\mathcal R_0 = \beta / \alpha$ for both models. Note that this is in full agreement with \eqref{eq:peak_cond} in the sense that when $\mathcal R_0 > 1$ the infection survives, but when $\mathcal R_0 < 1$ the infection spontaneously disappears (see Section 4 for a more careful analysis of the fixed-point structure of both models). 

The identification between basic reproduction numbers for both models is a direct consequence of the fact that, for initial conditions $S_0 \approx 1$ and $I_0 \approx 0$, the early dynamics of systems \eqref{eq:SIR} and \eqref{eq:modifiedSIR} are similar. More in detail, it is well-known (see for instance \cite{VandenDriessche2017}) that the initial dynamics of the SIR model under such initial conditions is given by 
\begin{equation}
I(t) \approx I_0 e^{\alpha (\mathcal R_0 - 1) t} ,
\end{equation}
while for the modified SIR the closed-form solution \eqref{eq:solsimply} shows that 
\begin{equation}
I(t) = \frac{I_0 \, e^{t/\tau}}{\left( S_0 + I_0 e^{t/\tau} \right)^{\beta \tau}} \approx I_0 \, e^{t/\tau} = I_0 e^{\alpha (\mathcal R_0 - 1) t},
\end{equation}
where we have used that $I_0 \ll S_0$, $S_0 \approx 1$ and $t \ll \tau$.

Moreover, the closed-form solution \eqref{eq:solsimply} allows the computation of a generalization of the basic reproduction number, the so-called replacement number $\mathcal R (t)$ \cite{Hethcote2000}. The function $\mathcal R (t)$ is defined as the average number of secondary cases produced by one infected individual during the mean infectious time $T$, where the infected individual is introduced in a population that is in an arbitrary state of the infection outbreak. In our case, since the rate $r$ of secondary infections is given by the term ${\beta\,S\,I}/({S+I})$ in~\eqref{eq:modifiedSIR}, taking into account that $T=1/\alpha$, we obtain
\begin{equation}
\label{eq:rn}
{\cal{R}}(t)=\frac{r\, T}{I}=\frac{\beta}{\alpha} \left( \dfrac{S}{S+I} \right)=\frac{\beta}{\alpha} \,\left( \frac{1}{1+ \frac{I_0}{S_0} e^{t/\tau}} \right)
\, .
\end{equation}
It is clear that $\mathcal R_0 > \mathcal R (t)$ for all $t \in \mathbb R$. Moreover, $\mathcal R_0$ is given by~\eqref{eq:rn} when $t \ll \tau$ and $I_0 \ll S_0$. In fact, we could also say that
\begin{equation}
\mathcal R_0 = \lim_{(S,I) \to (1,0)} \mathcal R (t) = \lim_{t \to -\infty} \mathcal R (t) = \frac{\beta}{\alpha} \, ,
\end{equation}
where this expression should be thought of as a way of reversing the dynamics towards the point $S \approx 1$ and $I \approx 0$. Graphically, this means that we are moving along the solution depicted in Figure \ref{fig:pd_modSIR} in the flow opposite direction, in order to arrive to the $S$-axis. Note that this limit is independent of the arbitrary time origin used to define the initial conditions for the system of ODEs. 

Another interesting insight is gained by computing the area below the infective curve $I(t)$. In order to do that, we do not even need to perform the integration of $I(t)$, since from the third equation in \eqref{eq:modifiedSIR} we get 
\begin{equation}
Area(I) = \int^\infty_0 I(t) dt = \frac{1}{\alpha} \int^\infty_0 \dot R(t) dt = \lim_{t \to \infty} R(t) - R(0) = \frac{S_0 + I_0}{\alpha} = \frac{1}{\alpha}.
\end{equation}
This result is specially interesting from a parameter estimation point of view, since it allows to obtain a value for $\alpha$ directly from the data. Afterwards, assuming that $S_0$ and $I_0$ are known, $\beta$ can be obtained, for instance, from \eqref{eq:tpeak}. Thus, the exact solution in closed form greatly simplifies the fitting with actual data. Moreover, as we will see below, since the dynamics of the SIR and modified SIR systems are quite close (for a realistic range of the parameters), this procedure for the determination of the parameters of the modified model provides a good approximation for the parameters of the SIR one. 

A related interesting quantity from the epidemiological point of view is the removal rate, defined by $\dot R(t)=\alpha I$. While for the SIR model it can only be approximated by a closed-form expression in certain limits (see \cite{Bailey1975}), in the modified SIR system it can obviously computed exactly. Therefore, the behaviour of the removal rate (divided by $\alpha$) for the modified SIR system can be directly extracted from Figure \ref{fig:comp_traj}.

For any epidemic outbreak, it is also enlightening to analyze the intersection of the susceptible $S(t)$, infective $I(t)$ and recovered $R(t)$ functions. The closed-form solution of the modified SIR model allows us to get some exact results in this respect, which we write down in the following 

\begin{proposition}
\label{eq:crossSIR}
For the modified SIR system given by \eqref{eq:modifiedSIR}, with $\beta > \alpha$ and initial conditions $S(0)=S_0$, $I(0)=I_0$, $R(0)=0$ such that $S_0 > I_0 > 0$ and $S_0 > \alpha/\beta$, any two of the curves $S(t)$, $I(t)$ and $R(t)$ always intersect exactly once, regardless of the exact values of the initial conditions and parameters of the system. 

Furthermore:
\begin{itemize}
\item[i)] The curves $S(t)$ and $I(t)$ intersect before the infection peak if $\beta > 2 \alpha$, exactly at the infection peak if $\beta = 2 \alpha$ and after the infection peak if  $\beta < 2 \alpha$.

\item[ii)] The three curves $S(t)$, $I(t)$ and $R(t)$ intersect in a common point if and only if $\frac{\beta}{\alpha} < \frac{\log 3}{\log 3 - \log 2}$ and $S_0 = \frac{1}{3} \left( \frac{3}{2} \right)^{\beta / \alpha}$.

\item[iii)] The three curves $S(t)$, $I(t)$ and $R(t)$ intersect exactly at the infection peak if and only if $\beta = 2 \alpha$ and $S_0 = \frac{3}{4}$.

\end{itemize}
\end{proposition}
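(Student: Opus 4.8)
The plan is to reduce everything to elementary one‑variable calculus by exploiting the closed form \eqref{eq:solsimply}. Work throughout with $S_0+I_0=1$, $R_0=0$, and introduce the single auxiliary variable $w(t)=S_0+I_0\,e^{t/\tau}$, a strictly increasing bijection of $t\in[0,\infty)$ onto $w\in[1,\infty)$. Writing $p:=\beta\tau=\beta/(\beta-\alpha)>1$ and using $I_0e^{t/\tau}=w-S_0$, the three populations become $S=S_0\,w^{-p}$, $I=(w-S_0)\,w^{-p}$, $R=1-w^{-(p-1)}$. The pairwise intersections now translate into transparent problems. For $S$ versus $I$ one uses \eqref{eq:ISexp}: the ratio $I/S=(I_0/S_0)e^{t/\tau}$ is strictly increasing from $I_0/S_0<1$ to $+\infty$, so $S=I$ holds at exactly one time $t_{SI}=\tau\log(S_0/I_0)>0$, where $w=2S_0$. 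For $S$ versus $R$, since $\dot S=-\beta SI/(S+I)<0$ and $\dot R=\alpha I>0$, the difference $S-R$ is continuous and strictly decreasing, running from $S_0>0$ at $t=0$ to $-1$ as $t\to\infty$, hence it vanishes exactly once.

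For $I$ versus $R$ — the only non‑monotone comparison, and the technical heart of the argument — I would clear denominators in $R(t)=I(t)$ through the $w$‑substitution: multiplying $1-w^{-(p-1)}=(w-S_0)\,w^{-p}$ by $w^{p}$ collapses it to the clean equation $g(w):=w^{p}-2w+S_0=0$. Since $g''(w)=p(p-1)w^{p-2}>0$ for all $w>0$, the function $g$ is strictly convex; together with $g(1)=S_0-1=-I_0<0$ and $g(w)\to+\infty$, strict convexity forces exactly one root $w_*>1$ (the second root of a strictly convex function lies in $(0,1)$, i.e.\ at $t<0$, and is irrelevant). This gives a unique $I$‑$R$ crossing and completes the first assertion that any two of the curves meet exactly once.

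Part (i) follows by comparing $t_{SI}=\tau\log(S_0/I_0)$ with the peak time from \eqref{eq:tpeak}, which can be written $t_{\mathrm{peak}}=t_{SI}+\tau\log(\tfrac{\beta}{\alpha}-1)$; since $\tau>0$, the sign of $t_{\mathrm{peak}}-t_{SI}$ is that of $\log(\tfrac{\beta}{\alpha}-1)$, which is positive, zero, or negative according as $\beta>2\alpha$, $\beta=2\alpha$, or $\beta<2\alpha$. For part (ii), any triple point must occur at the unique time $t_{SI}$ where $S=I$ (so $w=2S_0$), and there the remaining condition $R=I$ reads $g(2S_0)=0$, i.e.\ $(2S_0)^p=3S_0$; hence $S_0^{\,p-1}=3\cdot 2^{-p}$, and inserting $p-1=\alpha/(\beta-\alpha)$ and $p/(p-1)=\beta/\alpha$ gives $S_0=3^{\beta/\alpha-1}2^{-\beta/\alpha}=\tfrac13(3/2)^{\beta/\alpha}$. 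This value automatically satisfies $S_0>I_0$ (since $\beta/\alpha>1$), and it is $<1$ — the binding admissibility requirement, needed for $I_0>0$ — exactly when $(3/2)^{\beta/\alpha}<3$, i.e.\ $\beta/\alpha<\log 3/(\log 3-\log 2)$; the converse is obtained by reading the same computation backwards. Finally, part (iii) is the intersection of (i) and (ii): a triple point at the peak forces $t_{SI}=t_{\mathrm{peak}}$, hence $\beta=2\alpha$ by (i), and then (ii) with $\beta/\alpha=2$ (which indeed satisfies $2<\log 3/(\log 3-\log 2)$) yields $S_0=\tfrac13(3/2)^{2}=\tfrac34$; conversely, $\beta=2\alpha$ and $S_0=3/4$ give both $t_{SI}=t_{\mathrm{peak}}$ and, via (ii), the triple point.

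I expect the only genuine obstacle to be the $I$‑$R$ comparison: spotting that the substitution $w=S_0+I_0e^{t/\tau}$ turns the transcendental equation $R(t)=I(t)$ into the polynomial‑type relation $w^{p}-2w+S_0=0$, after which strict convexity disposes of uniqueness in one line. Everything else is either monotonicity (for $S$‑$I$ and $S$‑$R$) or bookkeeping with the exponents $p$, $p-1$ and $\beta/\alpha$ needed to cast the value of $S_0$ from part (ii) into the stated threshold on $\beta/\alpha$.
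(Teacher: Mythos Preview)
Your proof is correct and follows essentially the same path as the paper's: your variable $w=S_0+I_0e^{t/\tau}$ is precisely the paper's $X$ (compare your $w^{p}-2w+S_0=0$ with their equation \eqref{eq:XtIR}), and parts (i)--(iii) are argued identically. Two small differences are worth noting: for the $S$--$R$ crossing you use the clean monotonicity argument $\dot S<0$, $\dot R>0$ instead of reducing to $X^{\beta\tau}-X-S_0=0$, which is a simplification; and for the $I$--$R$ crossing you supply the convexity argument ($g''>0$, $g(1)<0$, $g\to+\infty$) explicitly, whereas the paper leaves this as ``an elementary computation''.
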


\begin{proof}

The solution of \eqref{eq:modifiedSIR} when $R(0)=0$ is given by \eqref{eq:solsimply}. In particular, the unique time at which the curves $S(t)$ and $I(t)$ intersect can be explicitly computed from \eqref{eq:ISexp}, and it reads 
\begin{equation}
\label{eq:tSI}
t_{SI} = \tau \log \left( \frac{S_0}{I_0} \right) .
\end{equation}
Since we are assuming $S_0 > I_0$, this time is always positive.  

From \eqref{eq:solsimply} we can also compute the times $t_{SR}$ and $t_{IR}$ such that $R(t_{SR}) = S(t_{SR})$ and $R(t_{IR}) = I(t_{IR})$. It is easy to check that these times are given by the common expression
\begin{equation}
\label{eq:tSRIRX}
t^* = \tau \log \left( \frac{X-S_0}{I_0} \right) ,
\end{equation}
where $X$ is a solution of the equation
\begin{equation}
\label{eq:XtSR}
X^{\beta \tau} - X - S_0 = 0 
\end{equation}
in the case of $t_{SR}$, while $X$ is a solution  of the equation
\begin{equation}
\label{eq:XtIR}
X^{\beta \tau} - 2X + S_0 = 0 
\end{equation}
in the case of $t_{IR}$. An elementary computation shows that equation \eqref{eq:XtSR} has only one solution, and therefore $t_{SR}$ is unique. Equation \eqref{eq:XtIR} has 2 solutions, the first one living in $(0,1)$ and the second one within $(1,\infty)$. The first of these solutions results in a negative or complex time, so $t_{IR}$ is defined by the unique solution of \eqref{eq:XtIR} in $(1,\infty)$. Therefore, we have proved that $t_{SI}$, $t_{SR}$ and $t_{IR}$ are unique, which means that the curves $S(t)$, $I(t)$ and $R(t)$ intersect exactly once.

Now, statement $i)$ derives straightforwardly from a comparison between \eqref{eq:tSI} and \eqref{eq:tpeak}.

To prove $ii)$ we first note that $t^* = t_{SI}$ if and only if $X=2 S_0$, which is a solution of both \eqref{eq:XtSR} and \eqref{eq:XtIR} if and only if
\begin{equation}
\label{eq:S0triple}
S_0 = \frac{1}{3} \left( \frac{3}{2}\right)^{\beta/\alpha} .
\end{equation}
Given that $S_0 < 1$, then $\frac{1}{3} \left( \frac{3}{2}\right)^{\beta/\alpha} < 1$, and therefore $\frac{\beta}{\alpha} < \frac{\log 3}{\log (3/2)}$.

Statement $iii)$ is a consequence of $i)$ and $ii)$, since in order that the intersection coincides with the infection peak we need that $\beta = 2 \alpha$, and substituting this into the condition for triple intersection \eqref{eq:S0triple} we get $S_0 = \frac{3}{4}$. Equivalently, note that if $\beta = 2 \alpha$, then by \eqref{eq:Ipeak}, $S(t_{\text{peak}}) = I(t_{\text{peak}}) = \frac{1}{4 S_0}$, so $R(t_{\text{peak}}) = 1 - \frac{1}{2 S_0}$, and by imposing that they coincide we get $S_0 = \frac{3}{4}$.

\vspace{-0.5cm}
\end{proof}

\vspace{-0.7cm}
\paragraph{Remark 3}
The condition $S_0 > \alpha / \beta$ in the previous Proposition is just the condition for the existence of an infectious peak \eqref{eq:peak_cond2} but we are not using it explicitly in the proof. Therefore, all the previous results which do not involve the infection peak are true if this condition is removed. 
\vspace{-0.5cm}
\paragraph{Remark 4}
The fact that the functions $S(t)$, $I(t)$ and $R(t)$ always intersect is a striking difference with respect to the original SIR system of Kermack and McKendrick \eqref{eq:SIR}.

\section{Fixed points and comparison with the SIR system}

Finally, it is worth performing a more detailed analysis of the dynamics of the modified SIR system \eqref{eq:modifiedSIR} when compared to the original SIR model \eqref{eq:SIR}. Although the exact closed-form solution here obtained in the modified case is valid for any values of $\alpha$ and $\beta$, for the sake of brevity from now on we only consider the case $\beta > \alpha$ (recall from \eqref{eq:peak_cond} that this is the regime in which an actual outbreak does exist), or equivalently, $\mathcal R_0 > 1$.

Qualitatively, the most significant difference between both dynamical systems is the stability of their fixed points. From equations \eqref{eq:SIR} and \eqref{eq:modifiedSIR} we see that $I=0$ is a line of non-isolated fixed points for both models. The stability of any of these fixed points $p=(S,0)$ is defined simply by the trace of the Jacobian evaluated at this point, $\mathbf{J} (p)$. For the SIR system this trace is 
\begin{equation}
\label{eq:trSIR}
\mathrm{Tr}\; \mathbf{J} (p) = \beta S - \alpha ,
\end{equation}
while for the modified SIR system it reads
\begin{equation}
\label{eq:trmodSIR}
\mathrm{Tr}\; \mathbf{J} (p) = \beta - \alpha .
\end{equation}
Therefore, for the modified SIR system all the points belonging to the line $I=0$ (with the exception of $(S,I)=(0,0)$) are unstable (recall that we only consider the epidemiologically relevant case $\beta > \alpha$). Note that this agrees with the value for the basic reproduction number given in Section 3. Meanwhile, for the SIR system points such that $S>\alpha/\beta$ are unstable, while points with $S<\alpha/\beta$ are stable. This can be clearly appreciated in Figures \ref{fig:pd_SIR} and \ref{fig:pd_modSIR}, where the corresponding flows are presented for both systems. Colored curves correspond to the phase space equation $I(S)$ for each model. Trajectories of the system starting at any point of the appropriate curve will follow this curve (in the direction of the flow) in order to reach the relevant fixed point.

The differences regarding the fixed point structure of these two systems can also be analyzed algebraically. For the SIR system \eqref{eq:SIR}, it is well-known that the phase space equation is 
\begin{equation}
I(S) = \frac{\alpha}{\beta} \log S - S + \mathcal C,
\end{equation}
where $\mathcal C$ is a constant (this is the equation of the curves in Figure \ref{fig:pd_SIR}, for different values of $\mathcal C$). In fact, 
\begin{equation}
\label{eq:C_SIR}
\mathcal C = S + I - \frac{\alpha}{\beta} \log S
\end{equation}
is the Casimir function for the associated Poisson structure (see \cite{BBG2020hamiltonianepidemics} for details). It is easy to prove that the equation $I(S) = 0$ always have a solution $S \in (0,\alpha/\beta)$. However, the phase space equation $I(S) = 0$ for the modified SIR system, where $I(S)$ is given by \eqref{eq:I_S}, always has $S=0$ as a solution (see Figure \ref{fig:pd_modSIR}). This equation is directly obtained from the Casimir function \eqref{eq:C}, and it is interesting to compare this Casimir function \eqref{eq:C} with the exponential of \eqref{eq:C_SIR}.

\medskip

\paragraph{Remark 4}
The previous discussion shows that the different qualitative behaviour of the systems \eqref{eq:SIR} and \eqref{eq:modifiedSIR} can be algebraically understood through the differences between the Casimir functions \eqref{eq:C} and \eqref{eq:C_SIR}, and in particular, the different structure and location of their zeroes within the phase space. 

From the epidemiological point of view, the existence of stable fixed points (different from $(0,0)$) in the $I=0$ axis  explains the well-known fact that in the original SIR model of Kermack and McKendrick the whole population is not infected during the evolution of the infection. While these results can be obtained from a dynamical systems approach, it is interesting to note their direct connection with the algebraic and geometric structure of the Poisson manifold underlying their description of epidemiological models as generalized Hamiltonian systems. 

\vspace{-0.5cm}

\paragraph{Remark 5} For the modified SIR system, the closed-form analytical solution \eqref{eq:sol} contains all the previous information. Solutions with $I_0=0$ are constant functions, and 
\begin{equation}
\lim_{t \to \infty} S(t) = 0, \qquad \lim_{t \to \infty} I(t) = 0, \qquad \lim_{t \to \infty} R(t) = 1 , 
\end{equation}
for any initial conditions such that $I_0 \neq 0$. This shows that the only stable fixed point is $(S,I) = (0,0)$. Note that for the modified SIR system it takes an infinite time to reach this fixed point, regardless of the initial conditions.

In Figure \ref{fig:comp_traj}, plots for some trajectories $S(t)$ and $I(t)$ contained in the phase space orbits from Figures \ref{fig:pd_SIR} and \ref{fig:pd_modSIR} are depicted. In the left column $\beta = 1$ and $\alpha = 0.2$ while in the right column $\beta = 1$ and $\alpha = 0.6$. Each plot contains four different curves: coloured ones correspond to $I(t)$ while black ones correspond to $S(t)$, and solid ones correspond to the modified SIR system while dashed ones correspond to the original SIR system. The first row shows the dynamics for initial conditions such that the outbreak rapidly extinguishes. The second row shows the limiting case given by $S_0 = \alpha / \beta$ (note that this value is the same for both models since $S_0+I_0 = 1$). The third row shows the typical behavior for values of the parameters and initial conditions for which there is an actual outbreak, and therefore $I(t)$ has a maximum. The fourth row shows a situation such that at the beginning of the outbreak a fraction of the total population is immunized.

\begin{figure}[H]
\begin{center}
\includegraphics[width=.4\textwidth, height=.3\textwidth]{./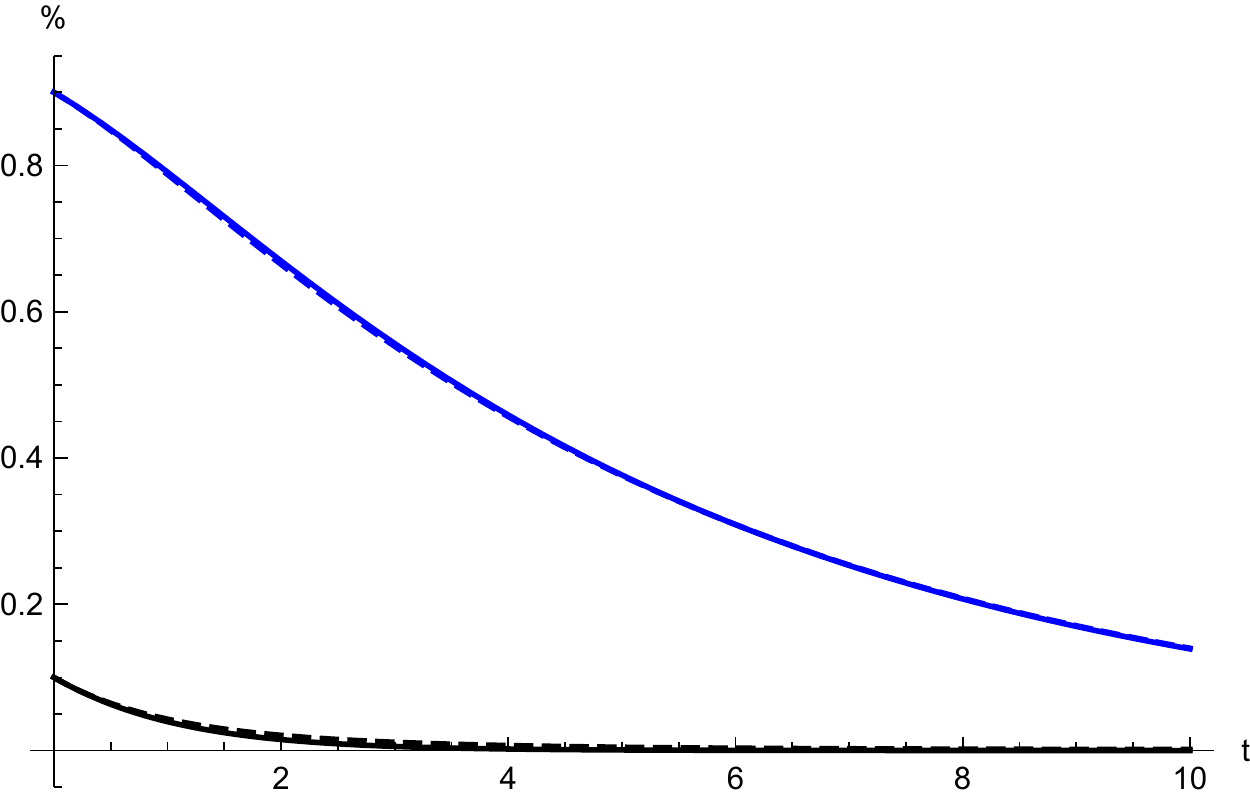} \includegraphics[width=.4\textwidth, height=.3\textwidth]{./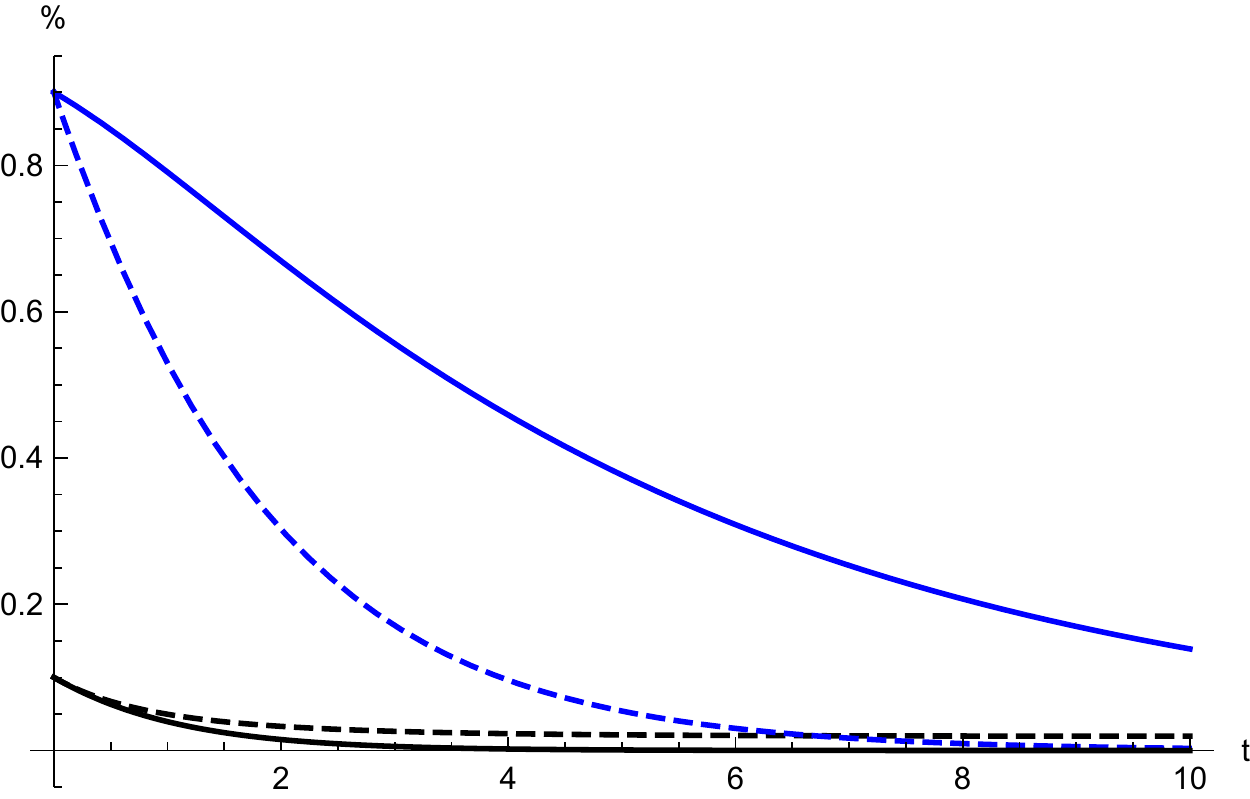} \\
\includegraphics[width=.4\textwidth, height=.3\textwidth]{./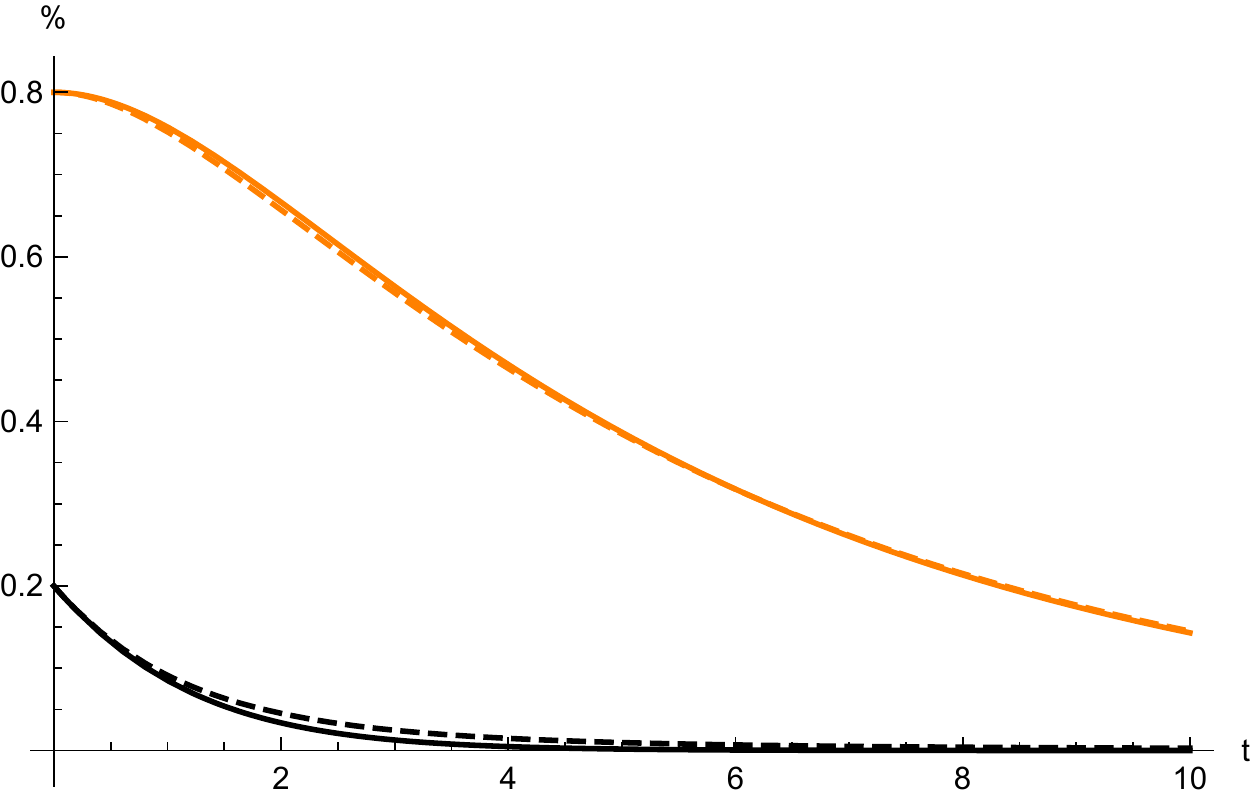} \includegraphics[width=.4\textwidth, height=.3\textwidth]{./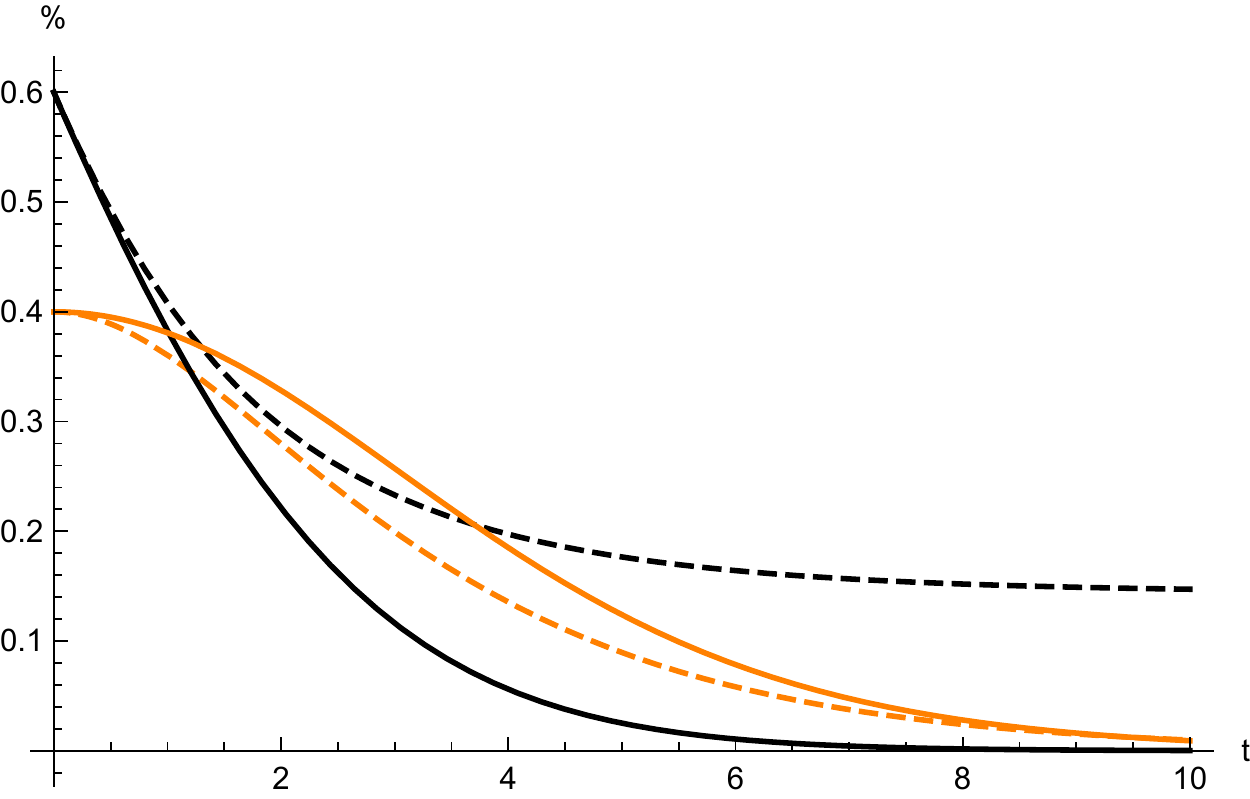} \\
\includegraphics[width=.4\textwidth, height=.3\textwidth]{./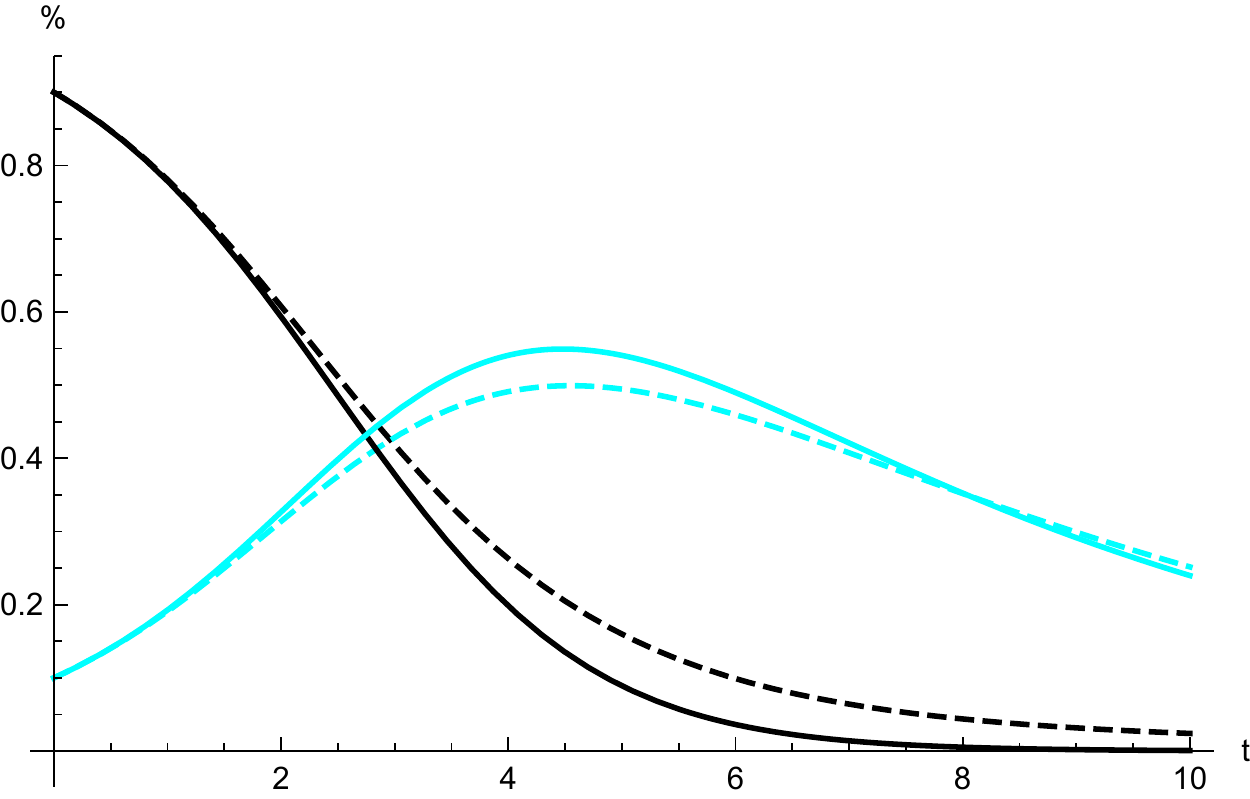} \includegraphics[width=.4\textwidth, height=.3\textwidth]{./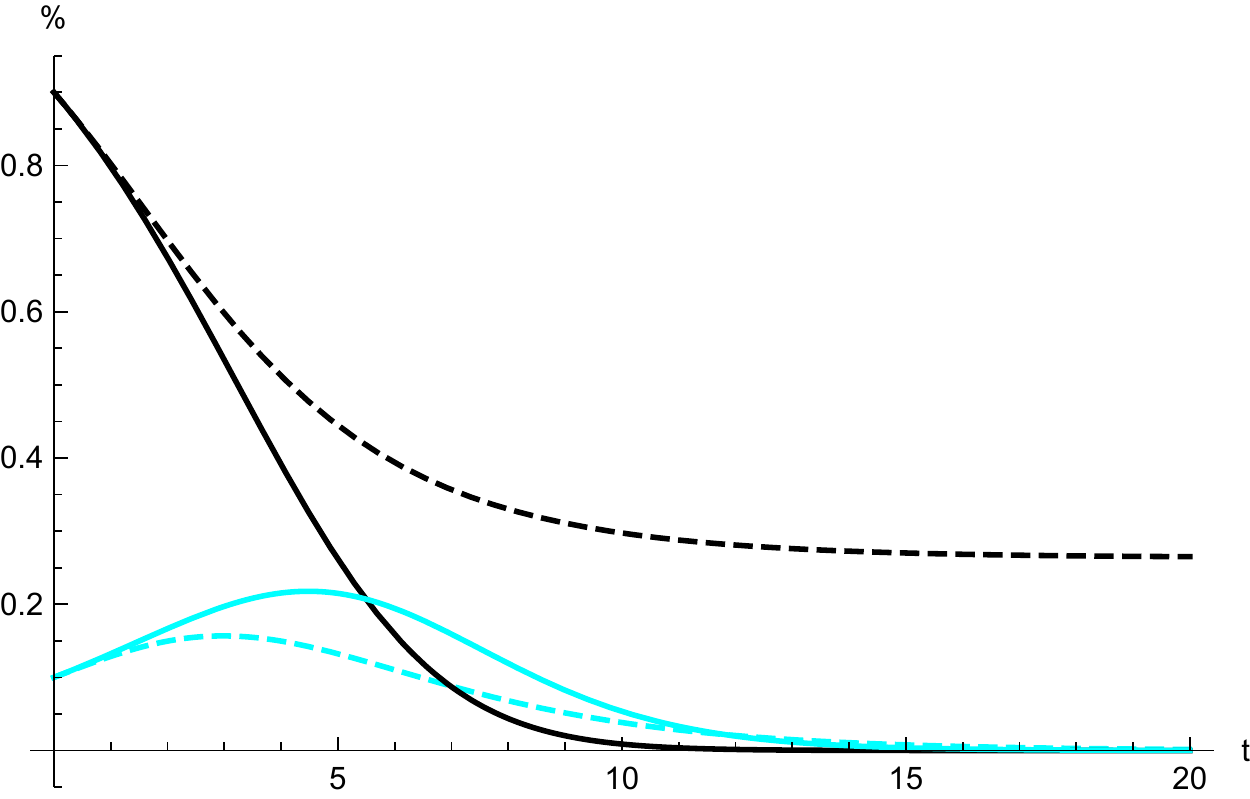} \\
\includegraphics[width=.4\textwidth, height=.3\textwidth]{./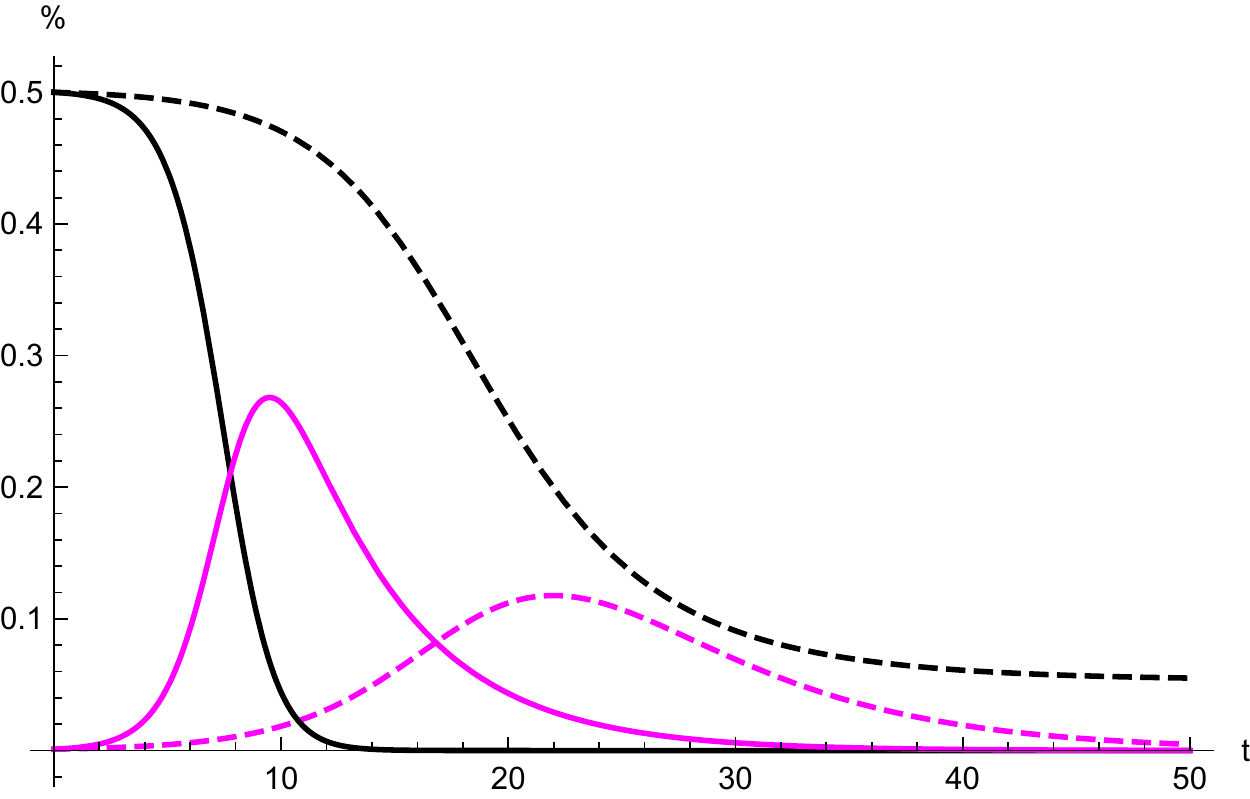} \includegraphics[width=.4\textwidth, height=.3\textwidth]{./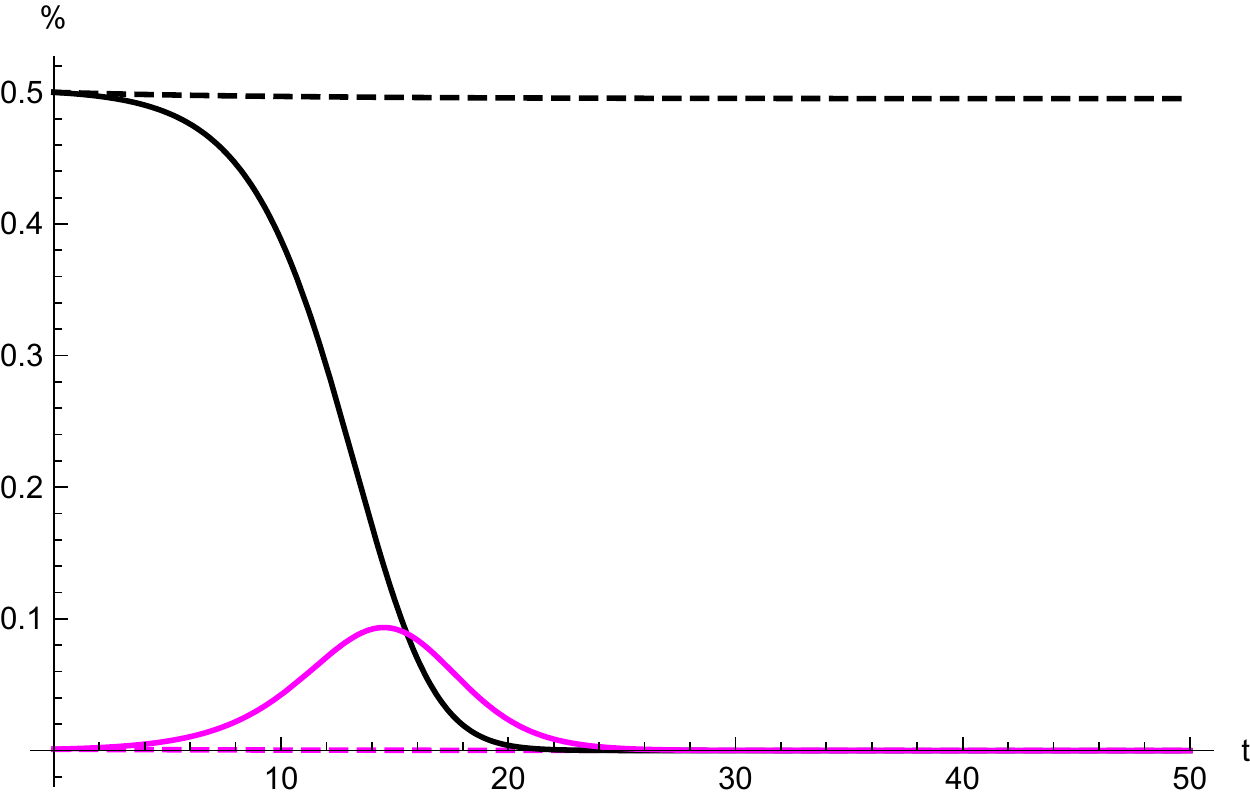} \\
\end{center}
\vspace{-0.5cm}
\caption{\small{ $S(t)$ (black) and $I(t)$ (colored) functions for the SIR system (dashed) and the modified SIR system (solid). $\beta = 1$. Left: $\alpha=0.2$. Right: $\alpha=0.6$.
Blue: $S_0 = 0.1$, $I_0 = 0.9$. Orange: $S_0 = \alpha / \beta$, $I_0 = 1- \alpha / \beta$. Cyan: $S_0 = 0.9$, $I_0 = 0.1$. Magenta: $S_0 = 0.5$, $I_0 = 0.001$.
\label{fig:comp_traj}
}}
\end{figure}

\hspace{.02\textwidth}

\begin{figure}[H]
\begin{center}
\includegraphics[width=.4\textwidth, height=.4\textwidth]{./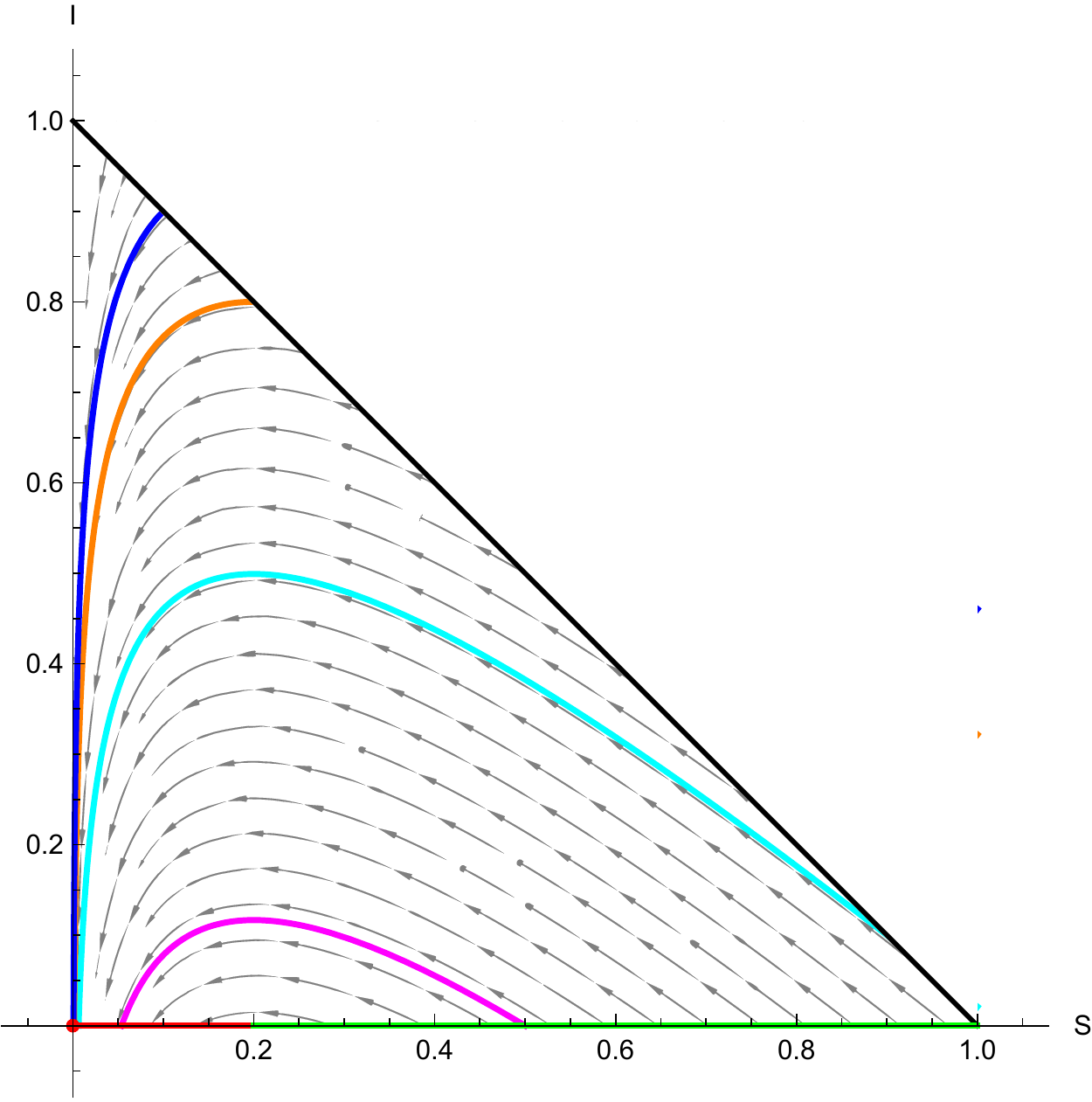}
\includegraphics[width=.4\textwidth, height=.4\textwidth]{./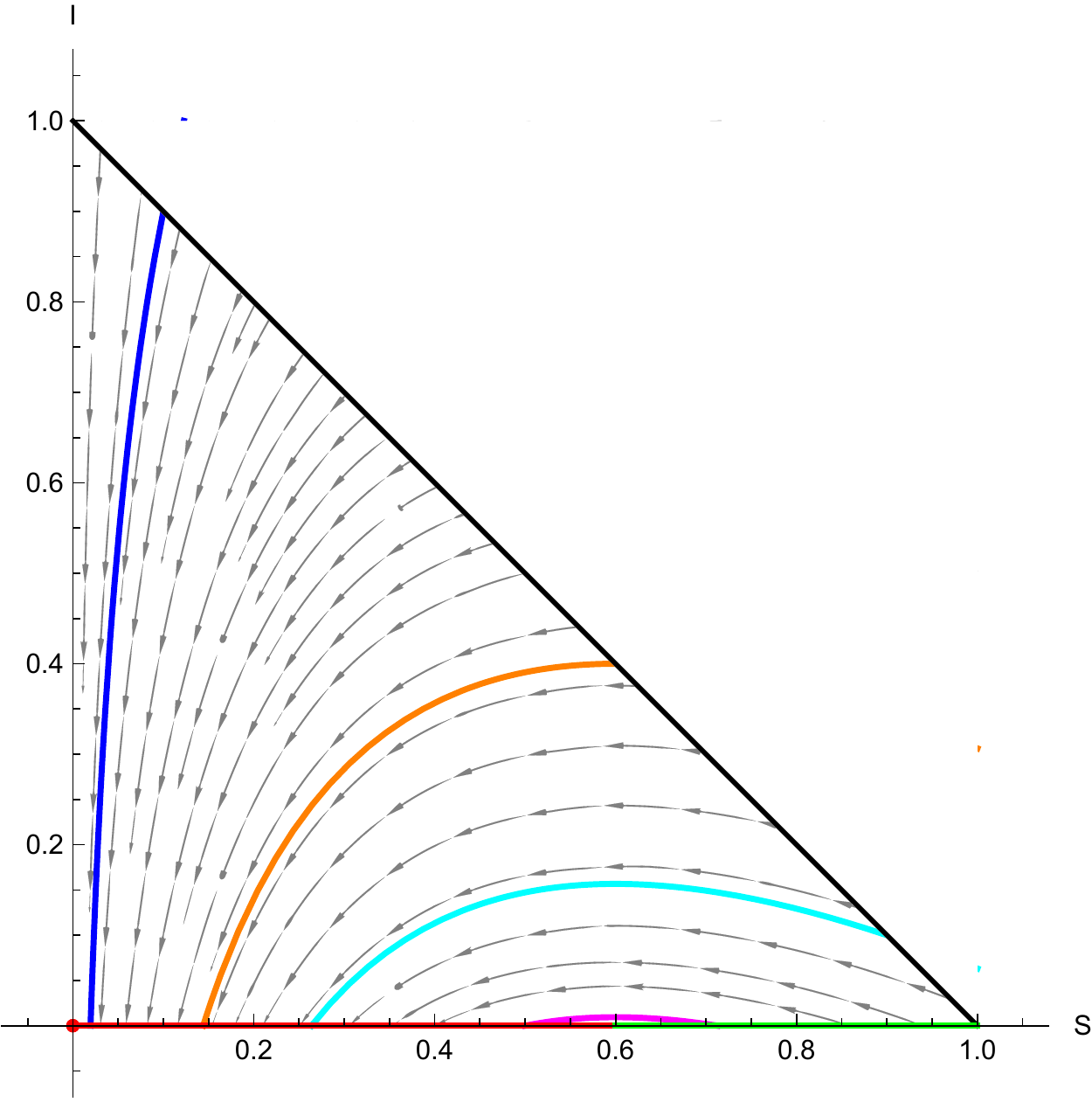}
\end{center}
\vspace{-0.5cm}
\caption{\small{ Phase space for the SIR system \eqref{eq:SIR}. $\beta = 1$. Left: $\alpha=0.2$. Right: $\alpha=0.6$. 
Blue line: $S_0 = 0.1$, $I_0 = 0.9$. Orange line: $S_0 = \alpha / \beta$, $I_0 = 1- \alpha / \beta$. Cyan line: $S_0 = 0.9$, $I_0 = 0.1$. Magenta line: $S_0 = 0.5$, $I_0 = 0.$ Red: Stable points. Green: Unstable points.
\label{fig:pd_SIR}
}}
\end{figure}
\begin{figure}[H]
\begin{center}
\includegraphics[width=.4\textwidth, height=.4\textwidth]{./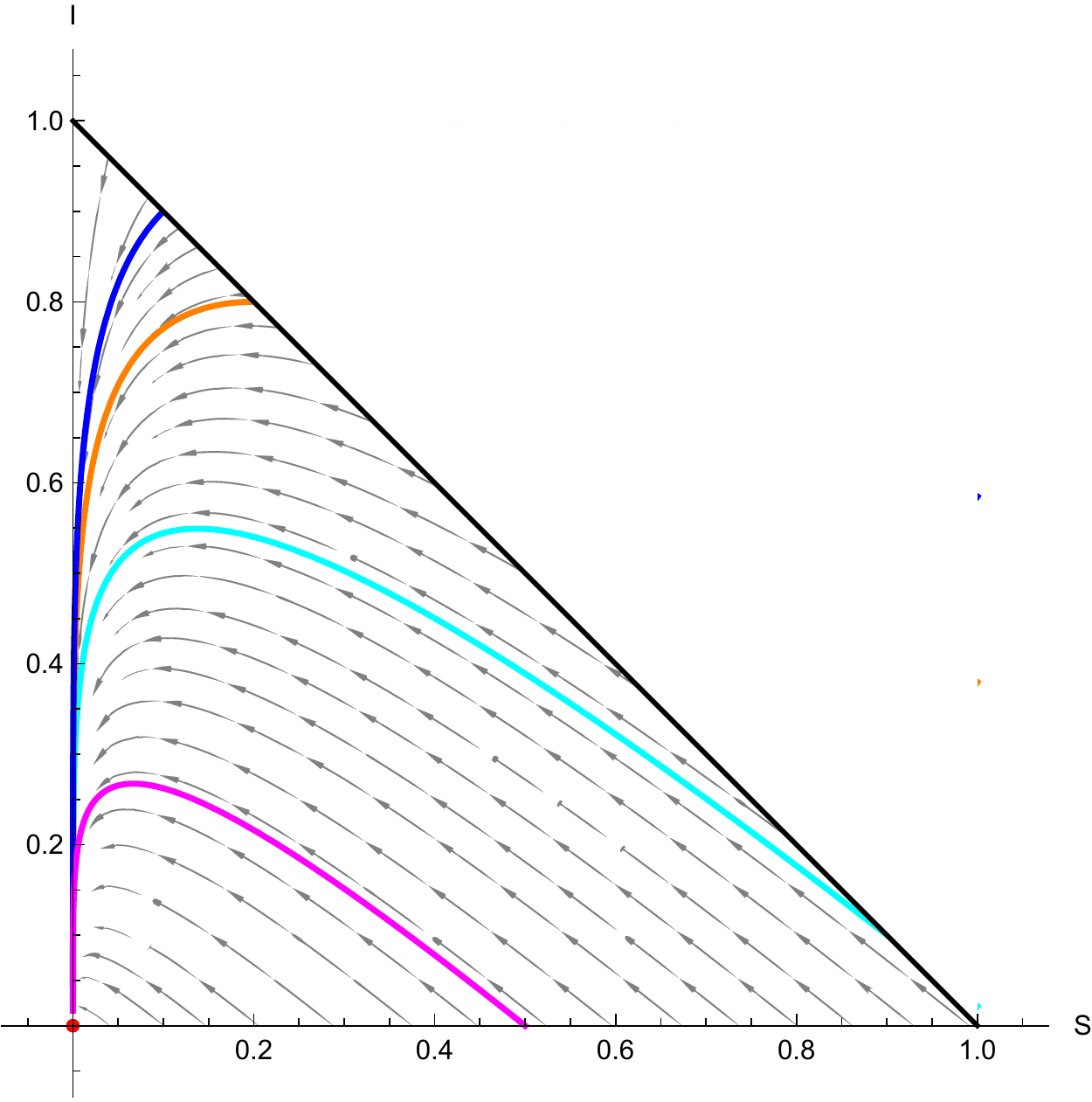}
\includegraphics[width=.4\textwidth, height=.4\textwidth]{./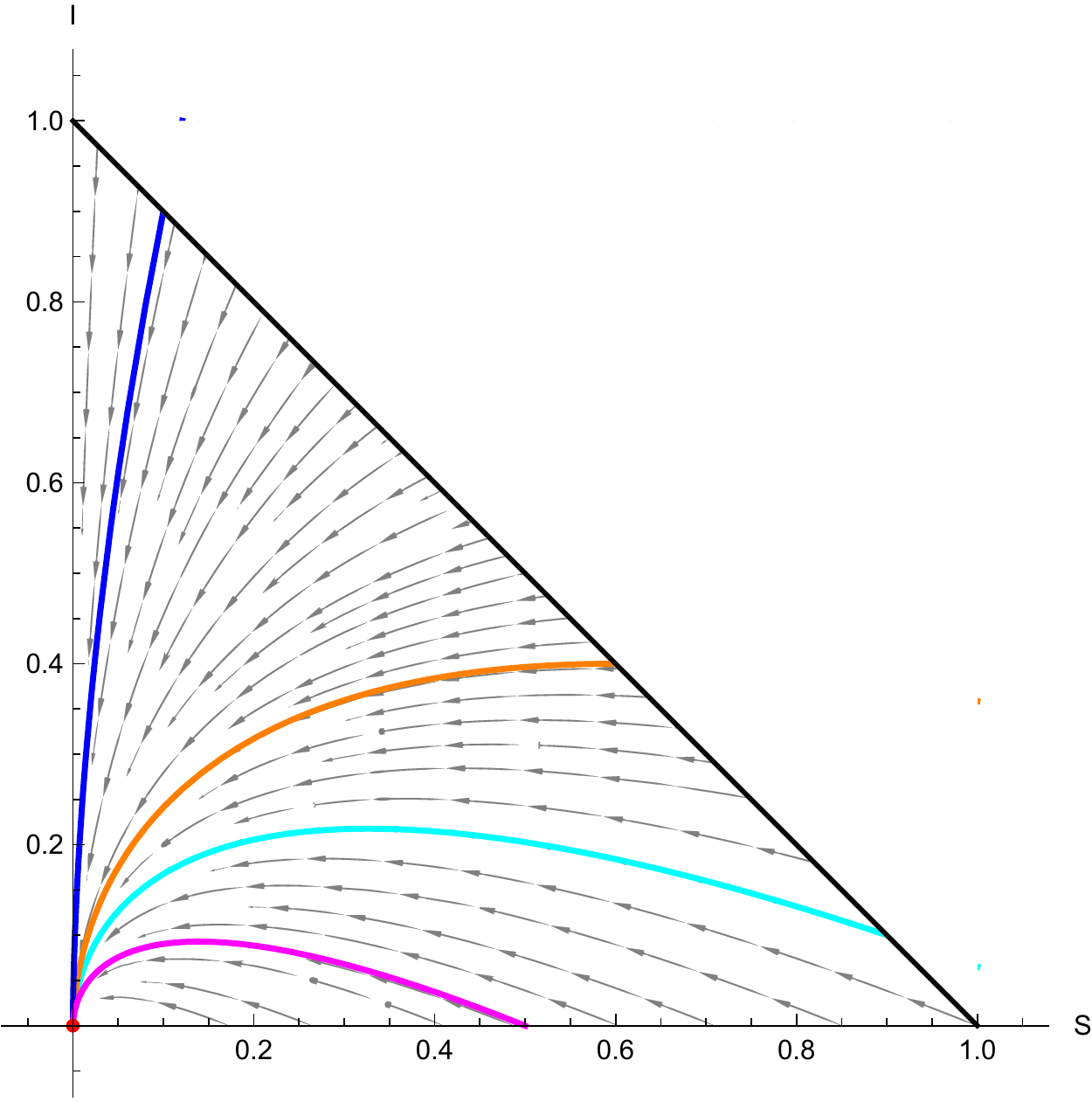}
\end{center}
\vspace{-0.5cm}
\caption{\small{ Phase space for the modified SIR system \eqref{eq:modifiedSIR}. $\beta = 1$. Left: $\alpha=0.2$. Right: $\alpha=0.6$.
Blue line: $S_0 = 0.1$, $I_0 = 0.9$. Orange line: $S_0 = \alpha / \beta$, $I_0 = 1- \alpha / \beta$. Cyan line: $S_0 = 0.9$, $I_0 = 0.1$. Magenta line: $S_0 = 0.5$, $I_0 = 0.$
\label{fig:pd_modSIR}
}}
\end{figure}

These plots show  all possible qualitatively different dynamics for the SIR and modified SIR systems. Essentially, as far as the ratio $\alpha/\beta$ grows, stronger differences between both models arise. In the left column we have $\alpha/\beta = 1/5$ and the dynamics of both systems are quite close (for the 3 first rows). In the right column $\alpha/\beta = 3/5$ and stronger differences appear, specially for $S(t)$. The most striking difference between both systems can be appreciated in the picture located at the last row, second column, which corresponds to a small perturbation of the case when initially half of the population is immunized and the other half is susceptible to the infection. In this case, the SIR system predicts no outbreak (it is a stable fixed point), while the modified SIR system does predict it. Obviously, this is due to the fact that in the modified SIR system we are assuming that the recovered population has been removed (death, quarantine, etc) and therefore does not interact (thus not contributing to the so-called `herd immunity', which is of course not attainable in this model). All these considerations can also be deduced from the phase space representation in Figures \ref{fig:pd_SIR} and \ref{fig:pd_modSIR}. 

However, it is important to stress that, very often, the  most relevant scenario from the epidemiological viewpoint is the one in which there are no immunized individuals at the beginning (or they are very few ones), which is given by the third row (cyan). We can conclude that in this scenario, specially when the ratio $\alpha/\beta$ is smaller, the SIR and modified SIR systems present similar features, with the modified SIR model always predicting a larger infection peak that the SIR.


\section{Modified SIR system with a time-dependent transmission rate}

We will now consider the modified SIR system where the transmission rate $\beta(t)$ becomes time-dependent, while we keep the recovery rate $\alpha$ as a constant value. As we mentioned in the introduction, the variability of the transmission rate along the evolution of a given disease can be usually related to modifications in the behavior of either the host or the pathogen/vector, which can be in some cases due to seasonal effects. Therefore, it seems natural to wonder whether such  time-dependent modified SIR models with smoothly varying $\beta(t)$ do admit also exact solutions in closed form (models with abrupt changes in time of the transmission rate are also considered in the literature~\cite{Liu}).

This question can be answered in the affirmative by recalling Remark 2 in Section 2, where the new variable $y(t) = {I(t)}/{S(t)} $ was considered. It is straightforward to check that the modified SIR model with $\beta(t)$ leads to a formally equivalent  system~\eqref{eq:MSIRy} (and the same would happen when a time-dependent recovery rate $\alpha(t)$ is simultaneously considered), namely
\begin{equation}
\label{eq:MSIRydept}
\dot{S} = -\frac{\beta(t)}{1+y}\,S\,y, \qquad\qquad\qquad \dot{y} = (\beta(t)-\alpha) \,y, \qquad\qquad\qquad \dot R = \alpha\,S\,y , \\
\end{equation}
for any time-dependent transmission rate function $\beta(t)$. Therefore, the  equation for $y(t)$ can be always integrated
\begin{equation}
\label{eq:MSIRyt}
y(t) = \frac{I(t)}{S(t)} = \frac{I_0}{S_0}e^{\int_0^t \left( \beta (s) - \alpha \right) \mathrm d s} , \,
\end{equation}
and its solution will be given in closed form provided  that the function $\beta(t)$ admits a well-defined primitive function. As a consequence,
we can write the explicit solution for $S(t)$ in the form
\begin{equation}\label{eq:SinMSIRydept}
S(t) = S_0 e^{-\int_0^t \frac{\beta(s) y(s)}{1+y(s)} \mathrm d s} = S_0 \exp \left( - \int_0^t \beta(s)  \left( \frac{\frac{I_0}{S_0} e^{\int_0^s \left( \beta (u) - \alpha \right) \mathrm d u}}{1 + \frac{I_0}{S_0} e^{\int_0^s \left( \beta (u) - \alpha \right) \mathrm d u}} \right) \mathrm d s \right) ,
\end{equation}
and finally we have
\begin{equation}
I(t) = S(t) \,y(t) , \qquad \qquad R(t) = \alpha\, I(t) .
\end{equation}

As we will show in the sequel, there do exist (a priori realistic) functions $\beta(t)$ such that~\eqref{eq:SinMSIRydept} can be also given in closed form. In particular, we will work out  two models: the first one with a monotonically decreasing transmission rate, and the second one corresponding to a disease in which the transmission rate has a maximum at $t\neq 0$.  Finally, for the sake of completeness, an oscillating transmission rate representing periodic seasonal effects will be also considered, although in this case~\eqref{eq:SinMSIRydept} cannot be given in closed form.

\subsection{A model with a monotonically decreasing transmission rate}

Let us consider the transmission rate function
\begin{equation}
\beta (t) = \alpha + \frac{1/\tau}{1+t/\tau}, \qquad\mbox{with}\qquad  \tau = 1/(\beta_m - \alpha) \, .
\end{equation}
This function presents a maximum with value $\beta_m$ at $t=0$, and then monotonically decreases towards $\alpha$ (see Figure \ref{fig:comp_beta}). This function could model a situation in which the population would be aware of the presence of the virus at $t = 0$ and they take progressive measures in order to prevent its spread, like social-distancing, mask-wearing, etc. In this case the closed-form solution of the modified SIR model is given by
\begin{equation}
y(t) = \frac{I_0}{S_0} \left( 1 + \frac{t}{\tau}\right) ,
\end{equation}
\begin{equation}
S(t) = S_0 \left( \frac{S_0 + I_0}{S_0 + I_0 (1+t/\tau)} \right)^{1-\frac{S_0}{I_0} \alpha \tau} e^{-\alpha t} ,
\end{equation}
and 
\begin{equation}
I(t) = I_0 \left( 1 + \frac{t}{\tau}\right) \left( \frac{S_0 + I_0}{S_0 + I_0 (1+t/\tau)} \right)^{1-\frac{S_0}{I_0} \alpha \tau} e^{-\alpha t} .
\end{equation}

As we can appreciate in Figure \ref{fig:comp_traj}, in this model the height of the peak of the outbreak is smaller than in the modified SIR model with  $\beta$ constant. Such peak takes place at the time
\begin{equation}
t_{peak} = \tau \left( \sqrt{\frac{S_0}{I_0 \alpha \tau}} - 1 \right) \, ,
\end{equation}
which  can be given in terms of the parameters of the model due to the closed form for $I(t)$.

\subsection{A model with a maximum transmission rate for $t\neq 0$}

Let us now consider the transmission rate function (see Figure \ref{fig:comp_beta})
\begin{equation}
\beta (t) = \alpha + \frac{2 t/\tau^2}{1+t^2/\tau^2}, \qquad
\mbox{with}
\qquad  \tau = 1/(\beta_m - \alpha).
\end{equation}
This function presents a maximum of the transmission rate with value $\beta_m$ at $t=\tau$, which would model a strong initial intensification of the transmission rate followed by its monotonic decreasing. This second example would provide a toy model for a situation such as an abrupt cancellation of a confinement, where the population incorrectly assumes that the disease is no longer present, therefore resuming its usual interactions very fast, and afterwards correcting this behavior progressively. The exact solution of the model in closed form is given by
\begin{equation}
y(t) = \frac{I_0}{S_0} \left( 1 + \frac{t^2}{\tau^2}\right) ,
\end{equation}
\begin{equation}
S(t) = S_0  \left( \frac{S_0 + I_0}{S_0 + I_0 (1+t^2/\tau^2)} \right) 
\exp \left[
\alpha\left(
\dfrac{S_0}{S_0+I_0}
\right)\left(\dfrac{\arctan \left(\sqrt{\frac{I_0}{S_0+I_0}}\frac{t}{\tau}\right)}{\sqrt{\frac{I_0}{S_0+I_0}}\frac{1}{\tau}}\right)
\right]
 e^{- \alpha t},
\end{equation}
and 
\begin{equation}
I(t) = I_0 \left( 1 + \frac{t^2}{\tau^2}\right) \left( \frac{S_0 + I_0}{S_0 + I_0 (1+t^2/\tau^2)} \right) 
\exp \left[
\alpha\left(
\dfrac{S_0}{S_0+I_0}
\right)\left(\dfrac{\arctan \left(\sqrt{\frac{I_0}{S_0+I_0}}\frac{t}{\tau}\right)}{\sqrt{\frac{I_0}{S_0+I_0}}\frac{1}{\tau}}\right)
\right]
e^{- \alpha t} .
\end{equation}

As it is shown in Figure \ref{fig:comp_traj}, for the second model the infection peak is higher than the one corresponding to the first model, although the maximum value for $\beta$ is in both cases the same. This shows that for this kind of models 
not only the maximum value of the transmission rate function is relevant, but also the time at which the peak of $\beta(t)$ appears.

\begin{figure}[H]
\begin{center}
\includegraphics[width=.6\textwidth, height=.3\textwidth]{./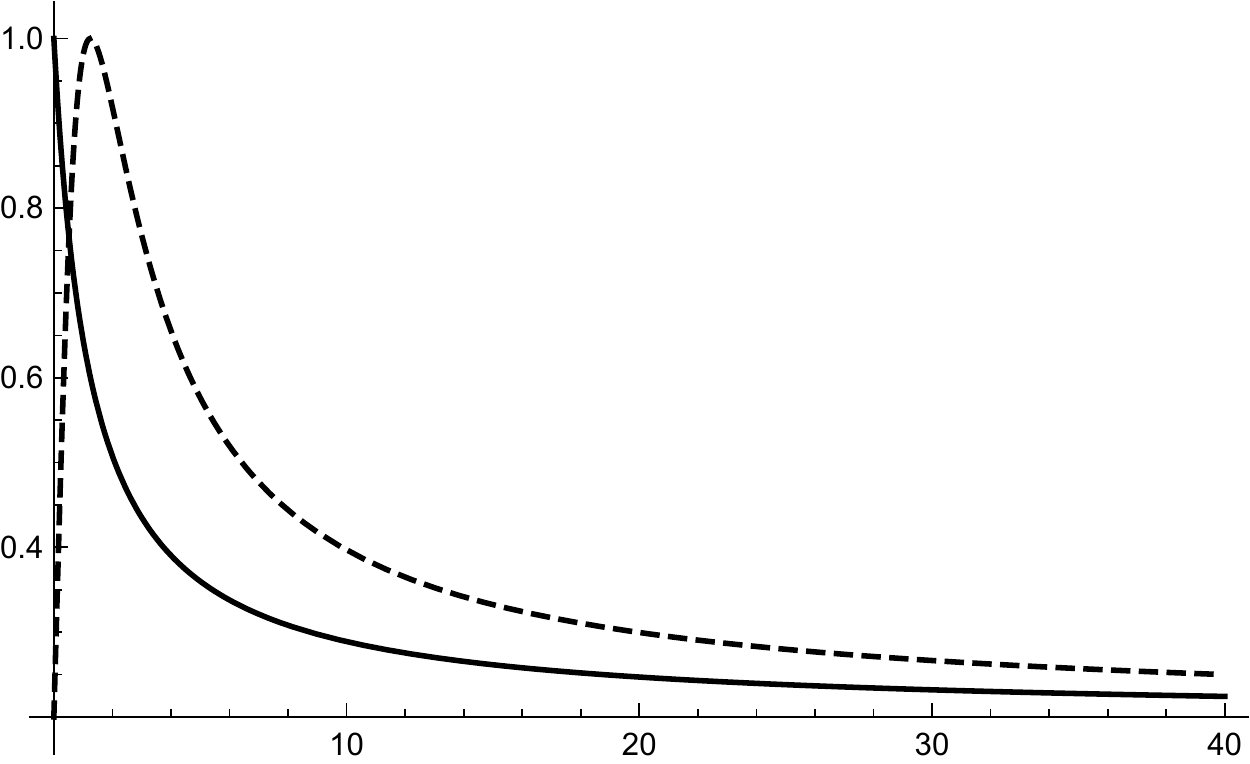} 
\end{center}
\vspace{-0.5cm}
\caption{\small{ Solid: $\beta (t) = \alpha + \frac{1/\tau}{1+t/\tau}$. Dashed: $\beta (t) = \alpha + \frac{2 t/\tau^2}{1+t^2/\tau^2}$. $\beta_m = 1, \alpha = 0.2$. 
\label{fig:comp_beta}
}}
\end{figure}

\begin{figure}[H]
\begin{center}
\includegraphics[width=.48\textwidth, height=.33\textwidth]{./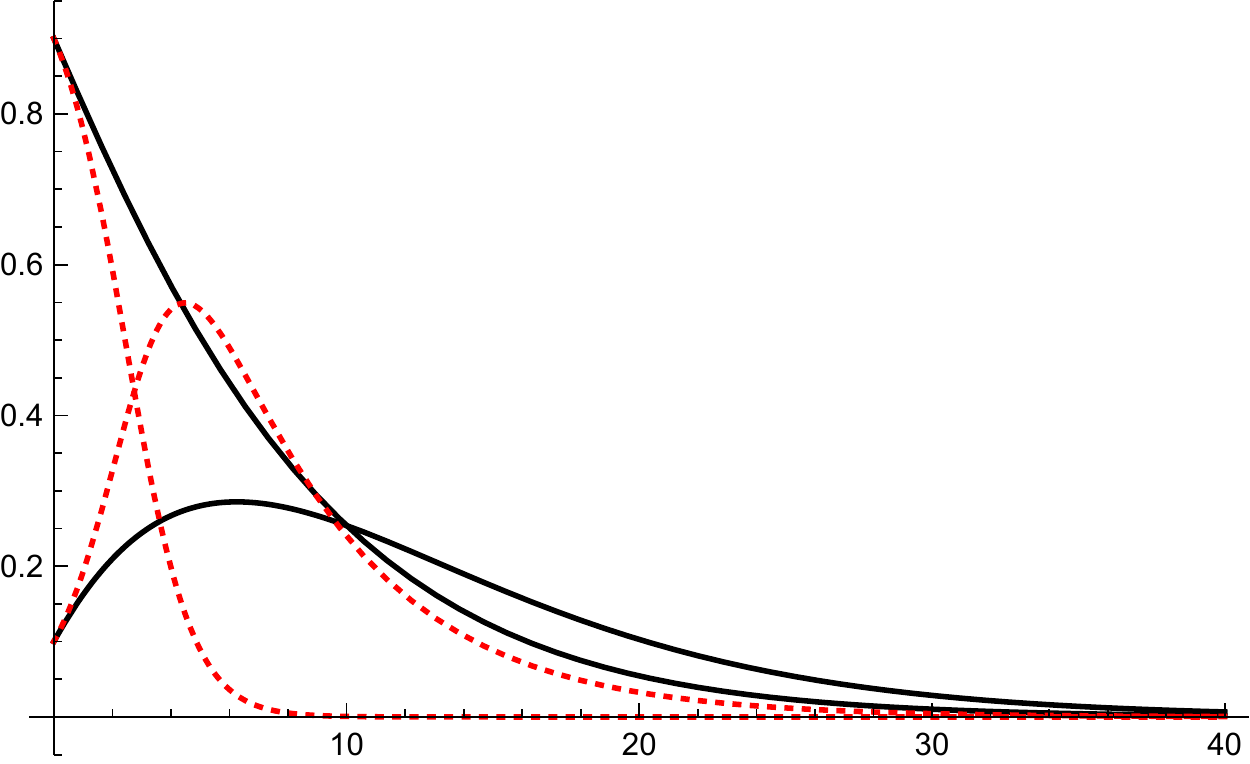} \includegraphics[width=.48\textwidth, height=.33\textwidth]{./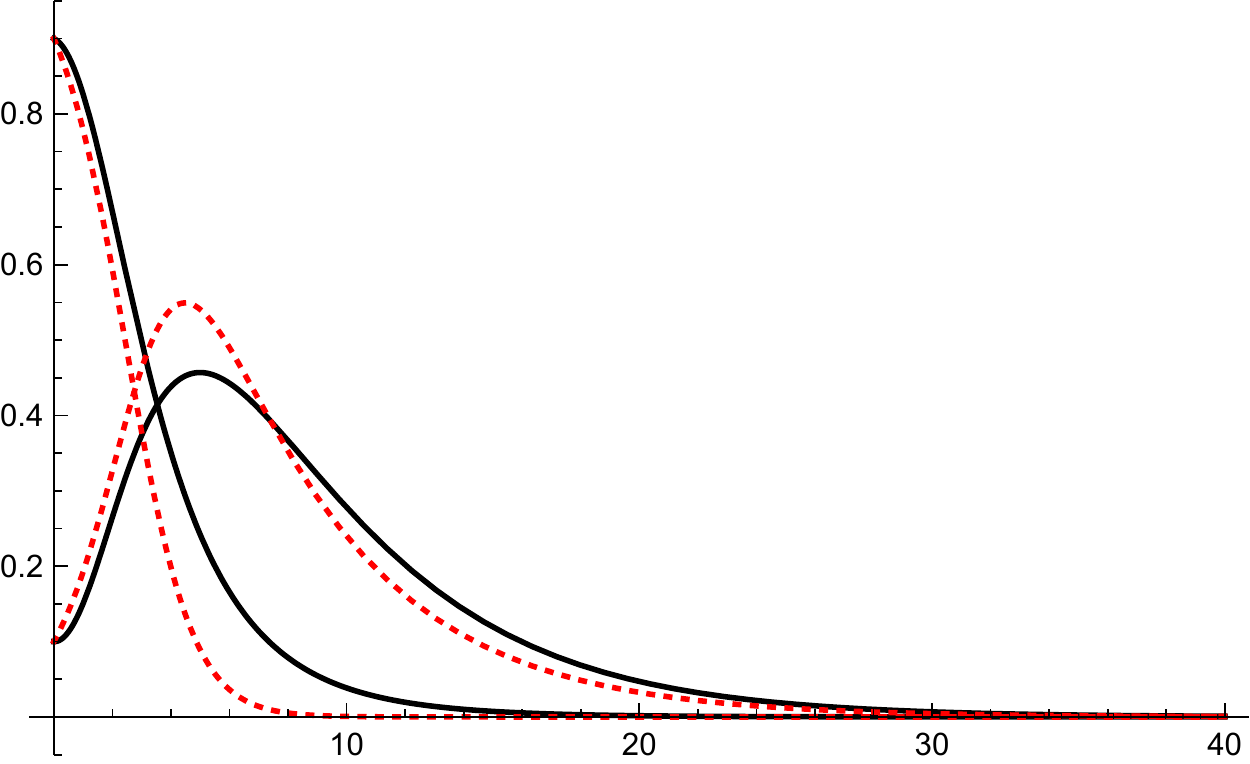} \\
\end{center}
\vspace{-0.5cm}
\caption{\small{ Left: Modified SIR with $\beta$ constant (dotted, red) and modified SIR with $\beta (t) = \alpha + \frac{1/\tau}{1+t/\tau}$ (black). Right: Modified SIR with $\beta$ constant (dotted, red) and modified SIR with $\beta (t) = \alpha + \frac{2 t/\tau^2}{1+t^2/\tau^2}$ (black). $\beta = \beta_m = 1, \alpha = 0.2$. $S_0 = 0.9$, $I_0 = 0.1$.
\label{fig:comp_traj}
}}
\end{figure}

\begin{figure}[H]
\begin{center}
\includegraphics[width=.48\textwidth, height=.33\textwidth]{./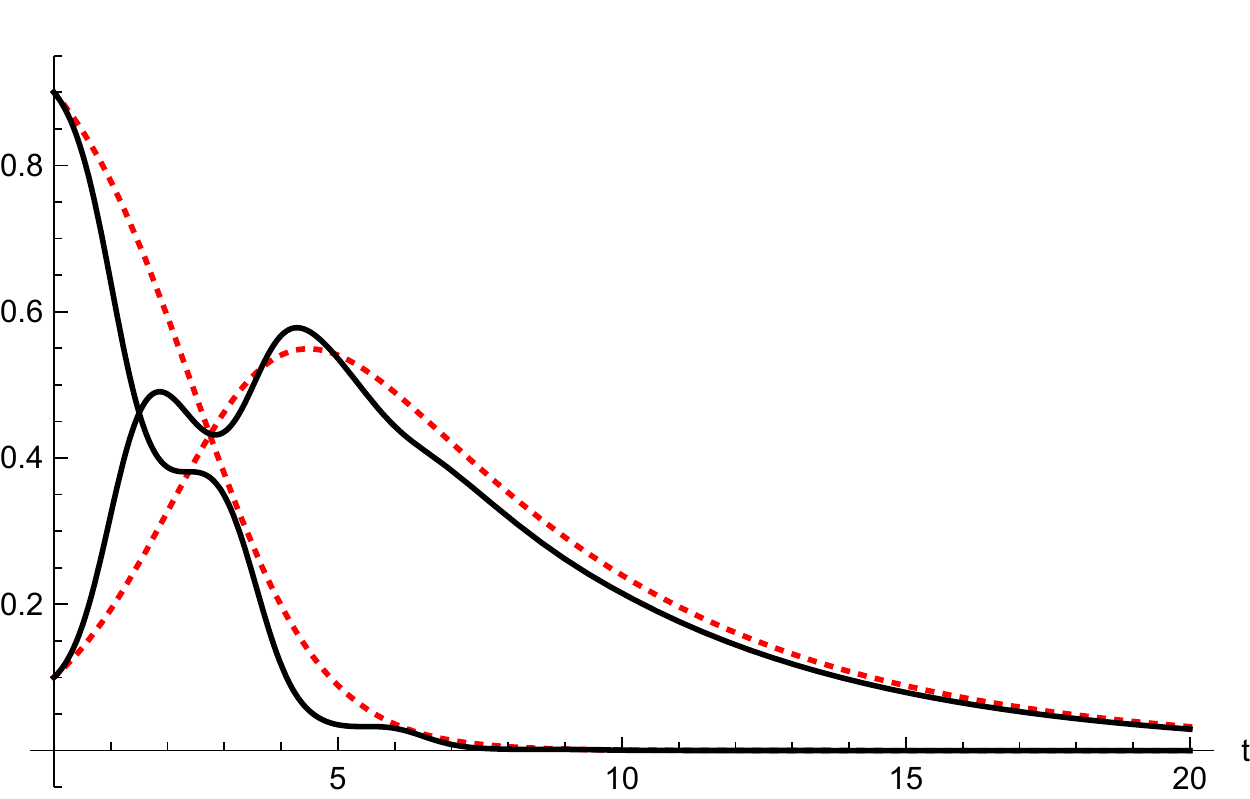} \includegraphics[width=.48\textwidth, height=.33\textwidth]{./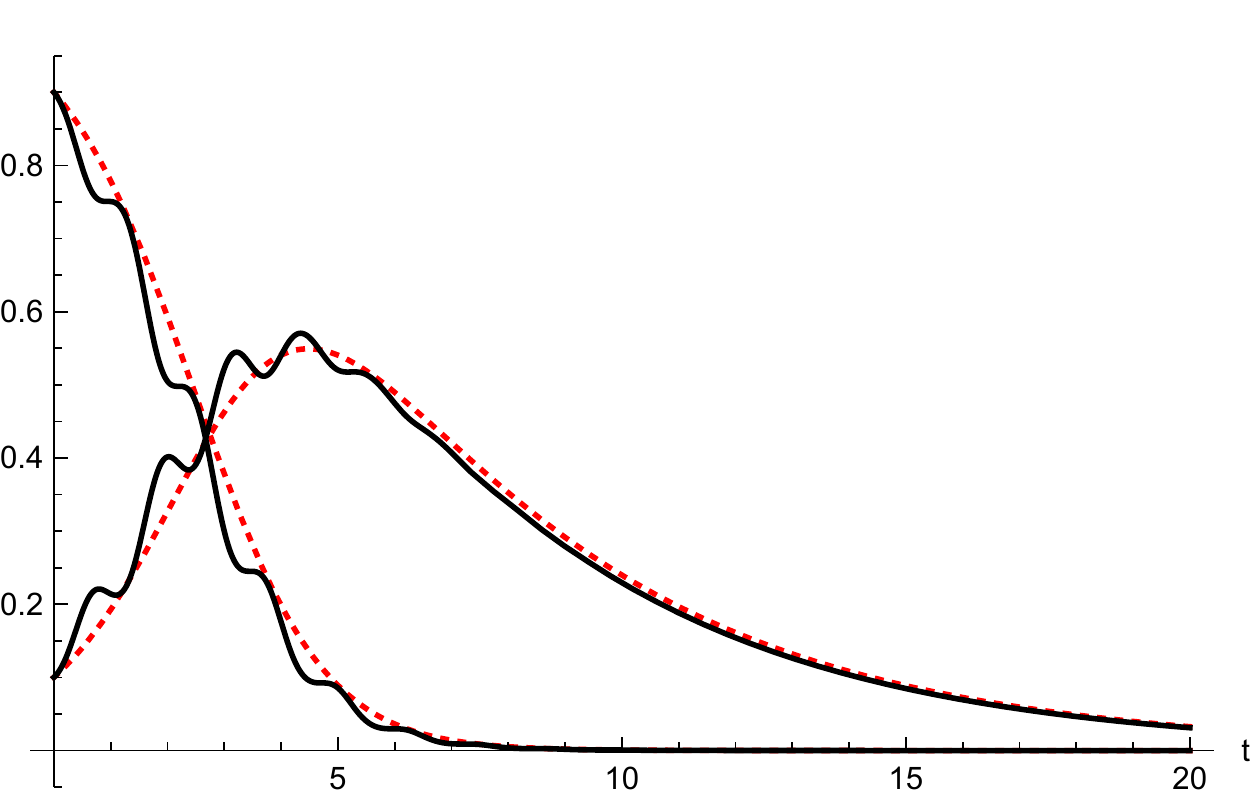} \\
\end{center}
\vspace{-0.5cm}
\caption{\small{ Modified SIR with $\beta$ constant (dotted, red) and modified SIR with $\beta (t ) = \beta_0 (1 + \gamma\, \sin (\omega\, t))$ (black) and $\gamma=1, \beta = \beta_0 = 1, \alpha = 0.2$; $S_0 = 0.9$, $I_0 = 0.1$.
Left: $\omega=2$. Right: $\omega=5$.
\label{fig:periodicbeta}
}}
\end{figure}


\subsection{A model with oscillating transmission rate}

In the case of deseases with periodic outbreaks due, for instance, to seasonal variabilities, the transmission rate function is often assumed to take a sinusoidal form (see~\cite{Keeling, Pollicott, Mummert})
\be
\beta (t ) = \beta_0 (1 + \gamma\, \sin (\omega\, t))\,,
\ee
where $0<\gamma\leq 1$ and $T=2\pi/\omega$ is the seasonality period.
In this case $y(t)$~\eqref{eq:MSIRyt} is straightforwardly obtained, but~\eqref{eq:SinMSIRydept} cannot be obtained in closed form, although its numerical integration can be easily performed and. In Figure \ref{fig:periodicbeta} two instances of periodic $\beta(t)$ are presented, where the relevance of the forcing period of $\beta(t)$ on the populations is neatly reflected (see~\cite{Mummert} for the fitting of actual influenza data with a SIR model endowed with with oscillating transmission rate, where the same type of forcing effect shown in Figure \ref{fig:periodicbeta} can be appreciated).

Summarizing, in this paper we have shown that the modified SIR model where recovered individuals are removed from the population can be exactly solved in closed form by making use of its underlying Hamiltonian structure, as well as some of its generalizations with time-dependent transmission rates. Despite SIR models are the most schematic ones in order to describe compartmental epidemiological dynamics, their features are rich enough in order to model many of their dynamical essentials. Therefore, having at hand their solutions in closed form is helpful, since this provides explicit expressions for the relevant epidemiological quantities in terms of the parameters of the model. In particular, the search for other exactly solvable models with time-dependent transmission rates would be certainly helpful in the field of non-pharmacological control strategies, and are worth to be investigated.

\section*{Acknowledgements}

This work has been partially supported by Ministerio de Ciencia e Innovaci\'on (Spain) under grants MTM2016-79639-P (AEI/FEDER, UE) and PID2019-106802GB-I00/AEI/10.13039/501100011033, and by Junta de Castilla y Le\'on (Spain) under grants BU229P18 and BU091G19. 

\small



\begin{thebibliography}{10}

\bibitem{MB2020science}
B.~F. Maier and D.~Brockmann.
\newblock {Effective containment explains subexponential growth in recent
  confirmed COVID-19 cases in China}.
\newblock {\em Science (80-. ).}, 746:eabb4557, 2020.
\newblock \href {http://dx.doi.org/10.1126/science.abb4557}
  {\path{doi:10.1126/science.abb4557}}.

\bibitem{VLABHPRYV2020}
A.~Viguerie, G.~Lorenzo, F.~Auricchio, D.~Baroli, T.~J.~R. Hughes, A.~Patton,
  A.~Reali, T.~E. Yankeelov, and A.~Veneziani.
\newblock {Simulating the spread of COVID-19 via spatially-resolved
  susceptible-exposed-infected-recovered-deceased (SEIRD) model with
  heterogeneous diffusion}.
\newblock {\em Appl. Math. Lett.}, 111:106617, 2020.
\newblock \href {http://arxiv.org/abs/2005.05320} {\path{arXiv:2005.05320}},
  \href {http://dx.doi.org/10.1016/j.aml.2020.106617}
  {\path{doi:10.1016/j.aml.2020.106617}}.

\bibitem{CG2020size}
M.~Cadoni and G.~Gaeta.
\newblock {Size and timescale of epidemics in the SIR framework}.
\newblock {\em Phys. D Nonlinear Phenom.}, 411:132626, 2020.
\newblock \href {http://arxiv.org/abs/2004.13024} {\path{arXiv:2004.13024}},
  \href {http://dx.doi.org/10.1016/j.physd.2020.132626}
  {\path{doi:10.1016/j.physd.2020.132626}}.

\bibitem{Hethcote1973}
H.~W. Hethcote.
\newblock {Asymptotic behavior in a deterministic epidemic model}.
\newblock {\em Bull. Math. Biol.}, 35(5-6):607--614, 1973.
\newblock \href {http://dx.doi.org/10.1007/BF02458365}
  {\path{doi:10.1007/BF02458365}}.

\bibitem{Bailey1975}
N.~T. Bailey.
\newblock {\em {The Mathematical Theory of Infectious Diseases}}.
\newblock Second edition, 1975.

\bibitem{Nucci2004}
M.~C. Nucci and P.~G. Leach.
\newblock {An integrable SIS model}.
\newblock {\em J. Math. Anal. Appl.}, 290(2):506--518, 2004.
\newblock \href {http://dx.doi.org/10.1016/j.jmaa.2003.10.044}
  {\path{doi:10.1016/j.jmaa.2003.10.044}}.

\bibitem{KM1927sir}
W.~O. Kermack and A.~G. McKendrick.
\newblock {A contribution to the mathematical theory of epidemics}.
\newblock {\em Proc. R. Soc. London. Ser. A, Contain. Pap. a Math. Phys.
  Character}, 115(772):700--721, 1927.
\newblock \href {http://dx.doi.org/10.1098/rspa.1927.0118}
  {\path{doi:10.1098/rspa.1927.0118}}.

\bibitem{Buonomo2020}
B.~Buonomo.
\newblock {Effects of information-dependent vaccination behavior on coronavirus
  outbreak: insights from a SIRI model}.
\newblock {\em Ric. di Mat.}, 2020.
\newblock \href {http://dx.doi.org/10.1007/s11587-020-00506-8}
  {\path{doi:10.1007/s11587-020-00506-8}}.

\bibitem{Postnikov2020sir}
E.~B. Postnikov.
\newblock {Estimation of COVID-19 dynamics `on a back-of-envelope': Does
  the simplest SIR model provide quantitative parameters and predictions?}
\newblock {\em Chaos, Solitons and Fractals}, 135, 2020.
\newblock \href {http://dx.doi.org/10.1016/j.chaos.2020.109841}
  {\path{doi:10.1016/j.chaos.2020.109841}}.

\bibitem{LA2004}
P.~G.~L. Leach and K.~Andriopoulos.
\newblock {Application of Symmetry and Symmetry Analyses to Systems of
  First-Order Equations Arising from Mathematical Modelling in Epidemiology}.
\newblock {\em Proc. Inst. Math. NAS Ukr.}, 50(1):159--169, 2004.

\bibitem{Harko2014}
T.~Harko, F.~S. Lobo, and M.~K. Mak.
\newblock {Exact analytical solutions of the Susceptible-Infected-Recovered
  (SIR) epidemic model and of the SIR model with equal death and birth rates}.
\newblock {\em Appl. Math. Comput.}, 236(0):184--194, 2014.
\newblock \href {http://arxiv.org/abs/1403.2160} {\path{arXiv:1403.2160}},
  \href {http://dx.doi.org/10.1016/j.amc.2014.03.030}
  {\path{doi:10.1016/j.amc.2014.03.030}}.

\bibitem{Barlow2020}
N.~S. Barlow and S.~J. Weinstein.
\newblock {Accurate closed-form solution of the SIR epidemic model}.
\newblock {\em Phys. D Nonlinear Phenom.}, 408:132540, 2020.
\newblock \href {http://arxiv.org/abs/2004.07833} {\path{arXiv:2004.07833}},
  \href {http://dx.doi.org/10.1016/j.physd.2020.132540}
  {\path{doi:10.1016/j.physd.2020.132540}}.

\bibitem{BBG2020hamiltonianepidemics}
A.~Ballesteros, A.~Blasco, and I.~Gutierrez-Sagredo.
\newblock {Hamiltonian structure of compartmental epidemiological models}.
\newblock {\em Phys. D Nonlinear Phenom.}, 413:132656, 2020.
\newblock \href {http://arxiv.org/abs/2006.00564} {\path{arXiv:2006.00564}},
  \href {http://dx.doi.org/10.1016/j.physd.2020.132656}
  {\path{doi:10.1016/j.physd.2020.132656}}.  
  
\bibitem{Cadoni2020}
M. Cadoni.
\newblock {How to reduce epidemic peaks keeping under control the time-span of the epidemic}.
\newblock {\em Chaos, Solitons and Fractals}, 138:109940, 2020.
\newblock \href {http://arxiv.org/abs/2004.02189} {\path{arXiv:2004.02189}},
  \href {http://dx.doi.org/10.1016/j.chaos.2020.109940}
  {\path{doi:10.1016/j.chaos.2020.109940}}.

\bibitem{Brauer1990}
F.~Brauer.
\newblock {Models for the spread of universally fatal diseases}.
\newblock {\em J. Math. Biol.}, 28(4):451--462, 1990.
\newblock \href {http://dx.doi.org/10.1007/BF00178328}
  {\path{doi:10.1007/BF00178328}}.

\bibitem{Gumral1993}
H.~G{\"{u}}mral and Y.~Nutku.
\newblock {Poisson structure of dynamical systems with three degrees of
  freedom}.
\newblock {\em J. Math. Phys.}, 34(12):5691--5723, 1993.
\newblock \href {http://dx.doi.org/10.1063/1.530278}
  {\path{doi:10.1063/1.530278}}.

\bibitem{Hethcote1976}
H. W. Hethcote. 
\newblock{Qualitative analyses of communicable disease models}.
\newblock{{\em  Math. Biosci.}, 28: 335--356, 1976}.

\bibitem{VandenDriessche2017}
P. van den Driessche.
\newblock {Reproduction numbers of infectious disease models}.
\newblock {{\em Infect. Dis. Model.}, 2(3): 288--303, 2017}.
\newblock {\href{http://dx.doi.org/10.1016/j.idm.2017.06.002}
{\path{doi:10.1016/j.idm.2017.06.002}}}.

\bibitem{Hethcote2000}
H. W. Hethcote.
\newblock {The Mathematics of Infectious Diseases}.
\newblock {{\em SIAM Review}, 42(4):599--653, 2000}.
\newblock {\href{https://doi.org/10.1137/S0036144500371907}
{\path{doi:10.1137/S0036144500371907}}}.

\bibitem{Keeling} 
M.J. Keeling, P. Rohani. 
\newblock{Modeling Infectious Diseases in Humans and Animals}.
\newblock{{\em Princeton University Press}, 2008.}

\bibitem{BacaerGomes} 
N. Bacaer, M. G. M. Gomes.
\newblock{On the Final Size of Epidemics with Seasonality}.
\newblock {{\em Bulletin of Mathematical Biology}, 71:1954--1966, 2009}. 
\newblock {\href{https://doi.org/10.1007/s11538-009-9433-7}
{\path{doi:10.1007/s11538-009-9433-7}}}.

\bibitem{Ponciano} 
J. Ponciano, M. Capistran.
\newblock{First principles modeling of nonlinear incidence rates in seasonal epidemics}.
\newblock {{\em PLoS Comput. Biol.}, 7(2):e1001079, 2011}.
\newblock {\href{https://doi.org/10.1371/journal.pcbi.1001079}
{\path{doi:10.1371/journal.pcbi.1001079}}}.

\bibitem{Liu} 
X. Liu, P. Stechlinski.
\newblock{Infectious disease models with time-varying parameters and general nonlinear incidence rate}.
\newblock {{\em Applied Mathematical Modelling}, 36:1974--1994, 2012}.
\newblock {\href{https://doi.org/10.1016/j.apm.2011.08.019}
{\path{doi:10.1016/j.apm.2011.08.019}}}.

\bibitem{Pollicott} 
M. Pollicott, H. Wang and H. Weiss.
\newblock{Extracting the time- dependent transmission rate from infection data via solution of an inverse ODE problem}.
\newblock{{\em Journal of Biological Dynamics}, 6(2):509--523, 2012}.
\newblock {\href{https://doi.org/10.1080/17513758.2011.645510}
{\path{doi:10.1080/17513758.2011.645510}}}.

\bibitem{Mummert} 
A. Mummert.
\newblock{Studying the recovery procedure for the time-dependent transmission rate(s) in epidemic models}.
\newblock{{\em J. Math. Biol.}, 67:483–507, 2013}.
\newblock {\href{https://doi.org/10.1007/s00285-012-0558-1}
{\path{doi:10.1007/s00285-012-0558-1}}}.







\end{thebibliography}
\end{document}